\tikzset{cross/.style={cross out, draw=black, minimum size=2*(#1-\pgflinewidth), inner sep=0pt, outer sep=0pt},
%default radius will be 1pt. 
cross/.default={5pt}}
\renewcommand{\email}[2][]{%
  \ifx\emails\@empty\relax\else{\g@addto@macro\emails{,\space}}\fi%
  \@ifnotempty{#1}{\g@addto@macro\emails{\textrm{(#1)}\space}}%
  \g@addto@macro\emails{#2}%
}
\newtheorem{thm}{Theorem}[section]
\newtheorem{lemma}[thm]{Lemma}
\newtheorem{proposition}[thm]{Proposition}
\theoremstyle{definition}
\newcommand{\ba}{\begin{eqnarray*}}
\newcommand{\ea}{\end{eqnarray*}}
\begin{document}

%%%%%%%%%%%%%%%%%%%%%%%%%%%%%%%%%%%%%%%%%%%%%5

\title[]{Ancestral reproductive bias in %single-type
branching processes
}

%\date{\today}

\author{David Cheek}
\address{David Cheek\\
Department of Radiology, Harvard Medical School, 25 Shattuck Street
Boston, MA 02115, 617-432-1000, USA}
\author{Samuel G.\ G.\ Johnston}
\address{Samuel G.\ G.\ Johnston\\
Department of Mathematical Sciences, University of Bath, Claverton Down, Bath, BA2 7AY, UK}

\begin{abstract}
%ABSTRACT 1
%For cells in a single-type branching process, the law of reproduction is homogeneous. However a randomly sampled cell sees an ‘inspection paradox’: on their ancestral lineage, the law of reproduction is heterogeneous. Our main result, via spine decomposition theory, is an exact integral formula for the evolving reproductive bias along a sample's ancestral tree. The bias has implications for the inference of cell division and mutation rates. In some cases, varying rates among ancestral cells may be explained parsimoniously by constant rates in the population. For an example we discuss recent phylogenetic studies of the developing human embryo, where mutation rates are purportedly elevated for the first few cell divisions after fertilisation.

%ABSTRACT 2
%Among cells in a single-type branching process, the law of reproduction is homogeneous. However a randomly sampled cell looking backwards in time finds an ‘inspection paradox’: the law of reproduction on their ancestral lineage is heterogeneous. Our main result, via spine decomposition theory, is an exact integral formula for the evolving reproductive bias along a sample's ancestral tree. The bias has implications for phylogenetics. V_Tariation among ancestors to a sample may give a misleading impression of variation in the historical population.  For an example, we note recent studies of the developing human embryo.

%ABSTRACT 3

Consider a branching process with a homogeneous reproduction law. Sampling a single cell uniformly from the population at a time $T > 0$ and looking along the sampled cell's ancestral lineage, we find that the reproduction law is heterogeneous - the expected reproductive output of ancestral cells on the lineage from time $0$ to time $T$ continuously increases. This `inspection paradox' is due to sampling bias, that cells with a larger number of
offspring are more likely to have one of their descendants sampled by virtue of their prolificity, and the bias's strength grows with the random population size and/or the sampling time $T$.  Our main result explicitly characterises the evolution of reproduction rates and sizes along the sampled ancestral lineage as a mixture of Poisson processes,  which simplifies in special cases. The ancestral bias helps to explain recently observed variation in mutation rates along lineages of the developing human embryo. 

\end{abstract}

\keywords{Branching process, uniform sampling, spines, reproductive bias, inspection paradox, mutation rates}
\subjclass[2010]{Primary: 60J80, 60G51. Secondary: 60K05, 92D10, 92D20}
\maketitle

%%%%%%%%%%%%%%%%%%%%%%%%%%%%%%
%%%%%%%%%%%%%%%%%%%%%%%%%%%%%%
%%%%%%%%%%%%%%%%%%%%%%%%%%%%%%
%%%%%%%%%%%%%%%%%%%%%%%%%%%%%%

\section{Introduction} \label{sec:intro}
%%%%%%%%%%%%%%%%%%%%%%%%%%%%%%
%%%%%%%%%%%%%%%%%%%%%%%%%%%%%%
%%%%%%%%%%%%%%%%%%%%%%%%%%%%%%
%%%%%%%%%%%%%%%%%%%%%%%%%%%%%%

\subsection{Uniform ancestral lineages}
Your ancestors’ reproductive behaviour is a biased representation of the historical population. For example every one of your ancestors had children, yet many people have none; and Genghis Khan is more likely to be found among your ancestors than a particular 12th century monk. The concept holds generally for any biological population. Individuals which are ancestral to a random sample have a statistically greater reproductive output than other individuals. While this bias resembles ‘survival of the fittest’, it doesn’t have to be a consequence of Darwinian selection acting on a heterogeneous population. Biased reproduction on ancestral lineages is also a feature of homogeneous populations.

\begin{figure}[h!]
\centering
\begin{tikzpicture}[xscale=1,yscale=1]
%\draw[step=1.0,yellow,thin] (0,-4) grid (12,4);
\draw[gray] (0,-3.7) -- (0,3.8);
\draw[gray] (10,-3.7) -- (10,3.8);
\node at (0,-4)   (a) {Time $0$};
\node at (10,-4)   (a) {Time $T$};
\draw[thick] (0,0) -- (2,0);
\draw[thick] (2,-2) -- (2,2);
\draw[thick] (2,-2) -- (3.8,-2);
\draw[thick] (2,0.2) -- (4.4,0.2);
\draw[thick] (2,2) -- (3.65,2) -- (3.65,3.1) -- (3.65, 1.3) -- (5.89,1.3) -- (5.89, 1.7) -- (6.88,1.7);
\draw[thick] (5.89,1.3) -- (5.89,0.7) -- (8.3,0.7);
\draw[thick] (10,0.3) -- (8.3,0.3) -- (8.3,0.9) -- (9.7,0.9);
\draw[thick] (10,1.1) -- (9.7,1.1) -- (9.7,0.7) -- (10,0.7);
\draw[thick] (9.7,0.834) -- (10,0.834);
\draw[thick] (3.65,1.9) -- (4.67,1.9);
\draw[thick] (3.65,2.3) -- (5.7,2.3);
\draw[thick] (3.65,3.1) -- (4.98,3.1) -- (4.98,3.6) -- (7.5,3.6) -- (7.5,3.8) -- (7.5,3.21) -- (9.6,3.21);
\draw[thick] (10,3.4) -- (9.6,3.4) -- (9.6,3) -- (10,3);
\draw[thick] (7.5,3.8) -- (9.4,3.8);
\draw[thick] (6.88,2.2) -- (6.88,1.5) -- (9.1,1.5);
\draw[thick] (6.88,2.2) -- (8.1,2.2) -- (8.1,2.3) -- (8.1,1.8) -- (10,1.8);
\draw[thick] (8.1,2.3) -- (10,2.3);
\draw[thick] (4.98, 3.1) -- (4.98, 2.7) -- (7.3,2.7) -- (7.3,2.8) -- (7.3,2.5) -- (9.33,2.5);
\draw[thick] (7.3,2.8) -- (10,2.8);
\draw[thick] (3.8,-1.8) -- (3.8,-1.1);
\draw[thick] (3.8,-1.8) -- (5.9,-1.8);
\draw[thick] (3.8,-1.1) -- (3.8,-2.2);
\draw[thick] (3.8,-1.1) -- (4.3,-1.1);
\draw[thick] (4.3,-0.8) -- (4.3,-1.6) -- (6.1,-1.6) -- (6.1,-1.9) -- (6.1,-1.27) -- (7.5, -1.27);
\draw[thick] (6.1,-1.9) -- (8.8,-1.9);
\draw[thick] (4.3,-0.8) -- (7.88,-0.8);
\draw[thick] (7.88,-0.3) -- (7.88,-1.2);
\draw[thick] (7.88,-0.3) -- (10,-0.3);
\draw[thick] (7.88,-1.2) -- (9.1,-1.2);
\draw[thick] (7.88,-0.9) -- (8.88,-0.9);
\draw[thick] (3.8,-2.2) -- (4.9,-2.2); 
\draw[thick] (10,-0.7) -- (9.1,-0.7) -- (9.1,-1.5) -- (10,-1.5);
\draw[thick] [line width=1.2mm, CadetBlue] (0,0) -- (2,0) -- (2,2) -- (3.65,2) -- (3.65, 1.3) -- (5.89,1.3) -- (5.89,0.7) -- (8.3,0.7) -- (8.3,0.9) -- ( 9.7,0.9) -- (9.7,0.7) -- (10,0.7);

\draw[thick] [line width=1.2mm, CadetBlue] (0,-2.9) -- (10,-2.9);

\draw [fill, white] (2,-2.9) circle [radius=0.1];
\draw [very thick, black] (2,-2.9) circle [radius=0.1];
\node at (2,-3.3)   (a) {$3$};

\draw [fill, white] (3.65,-2.9) circle [radius=0.1];
\draw [very thick, black] (3.65,-2.9) circle [radius=0.1];
\node at (3.65,-3.3)   (a) {$4$};

\draw [fill, white] (5.89,-2.9) circle [radius=0.1];
\draw [very thick, black] (5.89,-2.9) circle [radius=0.1];
\node at (5.89,-3.3)   (a) {$2$};

\draw [fill, white] (8.3,-2.9) circle [radius=0.1];
\draw [very thick, black] (8.3,-2.9) circle [radius=0.1];
\node at (8.3,-3.3)   (a) {$2$};

\draw [fill, white] (9.7,-2.9) circle [radius=0.1];
\draw [very thick, black] (9.7,-2.9) circle [radius=0.1];
\node at (9.7,-3.3)   (a) {$3$};
\end{tikzpicture}

\caption{A single lineage is sampled at time $T$ from a continuous-time Galton-Watson tree, and the reproduction times and sizes along the uniform lineage are indicated below the tree. Large reproduction events occur more frequently along the uniform ancestral lineage than their rate in the underlying population.}
\label{fig:sample}
\end{figure}
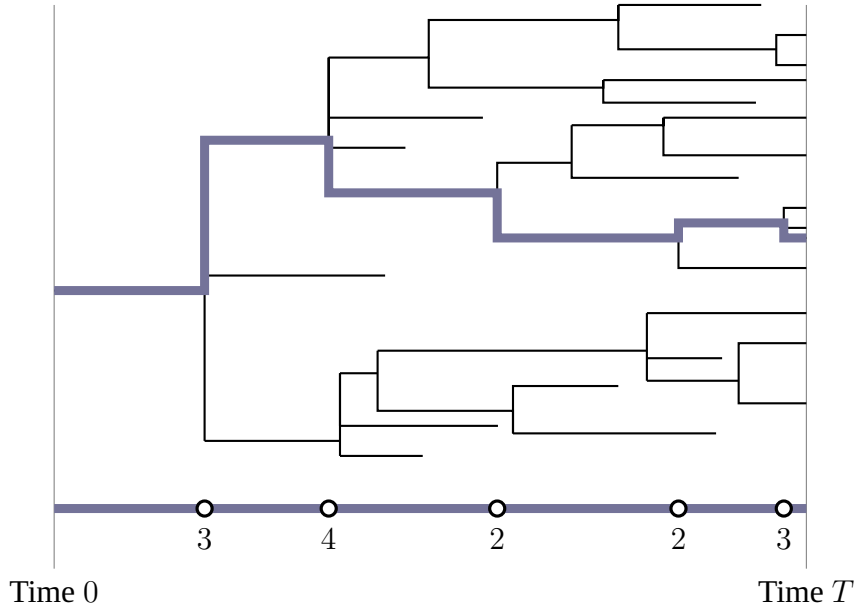

With a view to concreteness, consider a continuous-time Bienaym\'e-Galton-Watson (BGW) branching process with $N_t$ cells alive at time $t$: We begin with $N_0=1$ cell and each cell independently reproduces at rate $r$, to be replaced by $k \geq 0$ offspring with probability $p_k$ (here, $\sum_{k=0}^\infty p_k = 1$). So the total rate at which a cell is replaced by exactly $k$ children is given by $rp_k$.  Conditioning on the event $\{N_T \geq 1\}$ that there is at least one cell alive at time $T$, sample a single cell from the population at time $T$. Our object of study is the ancestral lineage of this sampled cell, which we call a `uniform ancestral lineage' (see Figure \ref{fig:sample}). What can be said about the reproduction events along a uniform ancestral lineage?

Naively, one might expect that reproduction events occur on this lineage as they do in the population, that is according to a Poisson process of rate $r$ and with each event having size $k$ with probability $p_k$. However this cannot be the case, for it is impossible that the lineage sees any reproduction event of size $k=0$. It is thus clear that somehow the reproduction rates along a uniform ancestral lineage must differ from that of the population. In fact, this absence of size $0$ reproduction events is part of a wider phenomenon of ancestral reproductive bias. Along a uniform ancestral lineage, reproduction events of size $k > 1$ tend to happen a rate greater than their rate $rp_k$ in the underlying population. 

The bias is related to the well-known inspection paradox of renewal theory, which loosely says that for a person waiting at a bus stop,  the time gap between the previous and next bus is expected to be longer than the typical time gap between buses.  The rationale is that longer time windows between buses make up a disproportionate amount of total time and are thus more likely to be sampled by the man than shorter windows are. This inspection paradox translated to branching processes says that a randomly sampled cell has a lifespan which is expected to be longer than typical cells in the population. Our interest however is not the sampled cell itself but rather its ancestors, who by contrast are expected to have shorter than typical lifespans - cells which are replaced more quickly by a larger number of offspring are more likely to have one of their descendants sampled, and are therefore more likely to land on the uniform ancestral lineage.

\subsection{Ancestral reproductive bias in large populations}

The magnitude of ancestral bias for large population sizes can be obtained via a simple heuristic argument: In a size $n$ population the total rate of size $k$ reproduction is $nrp_k$, and due to exchangeability, a child of a particular size $k$ reproduction event lands on a uniform ancestral lineage with probability $k/(n+k-1)$. Thus the rate of size $k$ reproduction events on a uniform ancestral lineage is
\begin{equation}\label{heuristic}
nrp_k\times\frac{k}{n+k-1}\approx rkp_k
\end{equation}
for large $n$, which is $k$-fold faster than the natural rate $rp_k$ in the population.

This biased ancestral reproduction has been observed by many authors in a variety of settings.
The idea goes back at least as far as \cite{CRW}, but perhaps the first explicit place this $rkp_k$ formula appears is in the work of Georgii and Baake \cite{Georgii}, who studied the uniform ancestral lineage of supercritical multitype BGW branching processes. In the large time limit, they found that the ancestral type follows a Markov chain along the lineage which, compared to the transition rates of typical individuals in the population, exhibits a bias towards more prolific types. By embedding a single-type BGW process into a multitype BGW process with countably many types (the exact construction involves declaring a particle to have type $k$ if it is destined to have $k$ offspring at the end of its lifetime), we recover the $rkp_k$ rates along the ancestral lineage.

Bansaye et al.\ \cite{BDMT} studied Markov processes indexed by supercritical Galton-Watson trees, and were able to describe the movement and reproductive behaviour along the uniform ancestral lineage for an infinite sampling time. Their result specialised to a homogeneous law of reproduction states that size $k$ birth events occur along the uniform ancestral lineage at rate $rkp_k$. We also note Marguet \cite{marguet}, who obtained similar results for branching populations starting with a large numbers of particles.

Finally,  let us mention work by the second author and coauthors \cite{HJR, johnston} have developed a framework to study questions surrounding the joint ancestry of $k$ particles sampled uniformly from continuous-time Galton-Watson trees. The main tool in their approach involves a spine change of measure associated with a size-biased version of the entire population, under which the coalescence probabilities associated with a uniform sample take a very tractable form. Here the reproduction rates and probabilities along a uniform ancestral lineage from a size biased tree are exactly given by $rkp_k$ \cite[Section 4.6]{HJR}.

For further literature on the biasing effect in uniform ancestral lineages, we refer the reader to \cite{HH, HR, kallenberg, biggins, samuels}, as well as to \cite{BDMT} and the references therein for a more detailed discussion. For further literature concerned with uniform samples of one or more particles from BGW processes, we refer the reader to \cite{Ath12a, Ath12b, Ath16, BLG, buhler, durrett, GH, johnston, JL, lambert, le, oconnell, zubkov}.

\subsection{Finite populations}

While the work in \cite{BDMT, marguet, Georgii, HJR, johnston} outlined above provides a neat approximation of reproduction along a uniform ancestral lineage when the population size is either large or size biased,  and often requires the assumption of supercriticality,  in the present article we allow for critical and subcritical regimes too, and with a view to real-world application we are interested in exact descriptions of ancestral behaviour for finite populations.

In contrast to the aforementioned limiting result,  we find that a population with a homogeneous reproductive law may exhibit a heterogeneous reproductive law along sampled ancestral lineages. Notably for a supercritical population which grows from one to many cells, the reproduction sizes and rates along a uniform ancestral lineage grow too, in the sense of stochastic dominance, between those given by the law of reproduction in the population and the size-biased version of that law.  There are implications for phylogenetics - it is not only the case that ancestors to samples from populations are statistically unusual, but moreover that variation among these ancestral individuals may give a misleading impression of variation in the historical population.

The visibility of ancestral reproduction in typical data should not be overstated however. Aside from human genealogical records,  biological populations do not usually provide a neat list of reproduction times and sizes along ancestral lineages.  Instead some information on reproduction may be recorded by mutations.  For an example consider the population of cells that make up a human body (here `reproduction' is cell division).  Recent studies \cite{Coorens,Park} have sequenced DNA of adult human cells in order to look backwards in time to the zygote, drawing phylogenetic trees of the developing human embryo that apparently depict the first cell divisions of embryogenesis.   These studies inferred variation in the mutation rate per cell division, in particular that the mutation rate was elevated for the first few divisions compared to later.  But why the mutation rate per division should vary is unclear. We explore a parsimonious explanation in terms of a branching process model of constant cell division and mutation rates, which due to ancestral bias qualitatively predicts the observation.

\subsection{Overview}
We now give a brief overview of the article:
\begin{itemize}
\item In Section 2, we present the most general result of the paper, which is a description of reproduction times and sizes along the ancestral lineage in terms of a mixture of Poisson point processes.  The proof of this result is in the spirit of classical spine arguments, where a relationship is established between the uniform ancestral lineage and another lineage generated by traversing the population's tree forwards in time.
\item In Section 3, we determine the total number of reproduction events along the uniform ancestral lineage.  The exact result simplifies for special cases of birth-death and heavy-tailed offspring laws. For the birth-death case,  the number of ancestral reproduction events grows asymptotically linearly with the length of the lineage.  By contrast for the heavy-tailed case,  the number of events grows exponentially with the length of the lineage. Thus the difference between the reproduction rates of ancestral vs typical cells is potentially unbounded. 
\item In Section 4,  we determine how the reproduction rate varies along the ancestral lineage. In particular for supercritical processes we see that the reproduction rate continuously increases along the lineage,  standing in contrast to the constant rate in the population. To explain the origin of ancestral variation in a homogeneous population, we also determine the ancestral reproduction rate as a function of the historical population size.
\item In Section 5, we discuss mutation rate variation on ancestral lineages seen in recent phylogenetic studies of the developing human embryo.
\end{itemize}

\color{black}

%%%%%%%%%%%%%%%%%%%%%%%%%%%%%%%%%%%
%%%%%%%%%%%%%%%%%%%%%%%%%%%%%%%%%%%%
\section{The point process of ancestral reproduction} \label{sec:GW_T}
We consider a BGW branching process with initial population size one,  each cell independently reproducing at rate $r$ to be replaced by $k$ new cells with probability $p_k$. The offspring generating function is $f(s)=\sum_{k=0}^\infty p_ks^k$. The number of cells at time $t$ is $N_t$, with process generating function
\begin{align*}
F_t(s) = \mathbb{E}[ s^{N_t}].
\end{align*}
Generating functions are the key tool of branching process analysis thanks to the fact that the branching property (independence among cells) ensures the semigroup property $F_{t_1} \left( F_{t_2}(s) \right) = F_{t_1 + t_2 }(s)$.  Setting $u(s) := r( f(s) - s)$ the process generating function $F_t$ satisfies the Kolmogorov forward and backward equations
\begin{align} \label{eq:kolmogorov}
\frac{ \partial}{ \partial t} F_t(s) =  u(s) \frac{ \partial}{ \partial s} F_t(s) \qquad \text{and} \qquad \frac{ \partial}{ \partial t} F_t(s) =  u (F_t(s))
\end{align} 
with initial condition $F_0(s) = s$. See for instance Athreya and Ney \cite{AN}. In particular, $F_t(s)$ is determined by $r$ and $(p_k)_{k \geq 0}$, and in several cases it is possible to solve (\ref{eq:kolmogorov}) to compute $F_t(s)$ explicitly. 
%Now we come to our most general result.

Given that the population is alive at time $T$, sample a cell uniformly from the population at time $T$. 
%Let $R$ be the point process measure on denote denote the number of size $j$ reproduction events on the ancestral lineage during the time interval $[0,t)$.
Our first main result gives an explicit characterisation of the law of reproduction events along the uniform ancestral lineage.

\begin{thm}\label{main}Given that the population is alive at time $T$, sample a cell uniformly from the population at time $T$. On their ancestral lineage there are $j$ reproduction events of sizes $\ell_1,..,\ell_j$ during the time intervals $[t_1+\mathrm{d}t_1),..,[t_j,t_j+\mathrm{d}t_j)$ with probability 
\begin{align} \label{eq:ellrate}
\frac{e^{-rT}}{(1-F_T(0))}\int_0^1\left(\prod_{i=1}^jr\ell_ip_{\ell_i}F_{T-t_i}(s)^{\ell_i-1}\mathrm{d}t_i\right)ds.
\end{align}
\end{thm}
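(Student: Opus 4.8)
The plan is to reduce the statement to a single generating-function computation on the branching tree, using two ingredients: first, that uniform sampling at time $T$ produces an awkward weight $1/N_T$ which can be linearised via the elementary identity $1/n = \int_0^1 s^{n-1}\,\mathrm ds$ for integers $n \ge 1$; and second, that once this has become an $s^{N_T}$-type weight, the branching property factorises the expectation along the sampled lineage, with each off-lineage subtree contributing a factor $F_{T-t}(s)$. Concretely, write $\mathcal F_T$ for the BGW tree up to time $T$, fix a configuration $c = (j;\, 0 < t_1 < \dots < t_j < T;\, \ell_1,\dots,\ell_j)$, and let $G_c$ be the number of cells alive at $T$ whose ancestral lineage has exactly $j$ reproduction events, the $i$-th falling in $[t_i, t_i + \mathrm dt_i)$ with size $\ell_i$. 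Conditionally on $\mathcal F_T$ and on $\{N_T \ge 1\}$, a uniformly chosen cell at time $T$ has ancestral configuration $c$ with probability $G_c/N_T$, so the probability in \eqref{eq:ellrate} equals $\mathbb{E}\!\left[\frac{G_c}{N_T}\mathbf 1_{\{N_T \ge 1\}}\right]\big/\,\mathbb{P}(N_T \ge 1)$, where $\mathbb{P}(N_T \ge 1) = 1 - F_T(0)$.

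Next I would remove the $1/N_T$. Since $G_c = 0$ on $\{N_T = 0\}$, the identity $\frac{1}{N_T}\mathbf 1_{\{N_T\ge 1\}} = \int_0^1 s^{N_T - 1}\,\mathrm ds$ gives
\[
\mathbb{E}\!\left[\tfrac{G_c}{N_T}\mathbf 1_{\{N_T\ge1\}}\right] = \int_0^1 \mathbb{E}\!\left[G_c\, s^{N_T - 1}\right]\mathrm ds ,
\]
so everything comes down to evaluating $\mathbb{E}[G_c\, s^{N_T - 1}]$ as a density in $(t_1,\dots,t_j)$.

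For that I would decompose $G_c$ as a sum over lineages: a cell $v$ alive at $T$ with configuration $c$ corresponds to a choice of spine cells $v_0 = \text{root}, v_1, \dots, v_j = v$, where $v_{i-1}$ reproduces at a time in $[t_i, t_i + \mathrm dt_i)$ into $\ell_i$ children, one of which is $v_i$, and $v_j$ does not reproduce before $T$. Growing the tree forward along such a spine and invoking the branching property, the $\ell_i - 1$ off-spine children born at event $i$ found independent subtrees, each with population distributed as $N_{T - t_i}$ at time $T$; as every cell alive at $T$ other than $v_j$ lies in one of these subtrees, $N_T - 1$ is a sum of these independent populations and $\mathbb{E}[s^{N_T - 1} \mid \text{spine}] = \prod_{i=1}^j F_{T-t_i}(s)^{\ell_i - 1}$. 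The density that the spine realises $c$ contributes $r\,\mathrm e^{-r(t_i - t_{i-1})}\,\mathrm dt_i$ for $v_{i-1}$'s clock ringing at time $t_i$ (with $t_0 := 0$), $p_{\ell_i}$ for the offspring count, a combinatorial factor $\ell_i$ for the choice of which child continues the spine, and $\mathrm e^{-r(T-t_j)}$ for $v_j$ surviving to $T$; the exponentials telescope to $\mathrm e^{-rT}$. Multiplying out gives $\mathbb{E}[G_c\, s^{N_T-1}] = \mathrm e^{-rT}\prod_{i=1}^j r\ell_i p_{\ell_i} F_{T-t_i}(s)^{\ell_i - 1}$ as a density in the $t_i$, and integrating over $s \in [0,1]$ and dividing by $1 - F_T(0)$ yields \eqref{eq:ellrate}. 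The degenerate case $j = 0$ (the root never reproduces, so $N_T = 1$) should be treated directly and indeed gives $\mathrm e^{-rT}/(1-F_T(0))$, matching the empty product.

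The main obstacle is making the decomposition of $G_c$ into a sum over spines genuinely rigorous: one must check that, as a density in the continuous times $t_i$, $G_c \in \{0,1\}$ almost everywhere — two distinct cells alive at $T$ carrying the same configuration would require two distinct cells to reproduce at a common time, a Lebesgue-null event — so that the combinatorial factor $\prod_i \ell_i$ arises with no over-counting, and one must justify the ``grow forward along the spine'' step, which is in essence a spinal change of measure relating the uniform ancestral lineage to a lineage traced forward through the tree, together with the independence and distributional identification of the off-spine subtrees via the branching (strong Markov) property. Once those two points are pinned down, the remainder is bookkeeping with Poisson-clock densities.
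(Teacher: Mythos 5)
Your proposal is correct and follows essentially the same route as the paper: the identity $\tfrac{1}{N_T}\mathrm{1}_{\{N_T\geq 1\}}=\int_0^1 s^{N_T-1}\mathrm{d}s$ combined with the branching-property factorisation of off-lineage subtrees into $\prod_i F_{T-t_i}(s)^{\ell_i-1}$ is exactly the content of Lemma \ref{summingspine}, and your sum over labelled lineages producing the size-bias factor $\prod_i \ell_i$ is a deterministic (many-to-one) rewriting of the randomised spine $W_T$ and Lemma \ref{spineidentity}. The only cosmetic difference is that the paper routes the lineage count through the forward-chosen spine with weight $\prod_{v\prec W_T}L_v$ rather than summing directly over Ulam--Harris labels.
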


It is possible to give a more probabilistic statement of Theorem \ref{main}:

\begin{thm} \label{main2} Let $S$ be a random variable with density $F_T'(s)/(1-F_T(s))$ for $s\in[0,1]$. Then independently for each $\ell$, size $\ell$ reproduction events occur along the uniform ancestral lineage according to a time inhomogeneous Poisson point process with intensity function
\begin{align} \label{eq:prate}
r_\ell(S,t) := r \ell p_\ell F_{T-t}(S)^{\ell-1}.
\end{align}
\end{thm}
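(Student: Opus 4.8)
The plan is to obtain Theorem~\ref{main2} directly from Theorem~\ref{main}: both assertions prescribe, for every $j$, every choice of sizes $\ell_1,\dots,\ell_j$ and every choice of times $t_1<\dots<t_j$ in $(0,T)$, an infinitesimal probability for the uniform ancestral lineage to carry exactly $j$ reproduction events, of those sizes and at those times, and it suffices to check that the two prescriptions agree. Since the intensity \eqref{eq:prate} satisfies $r_{\ell_i}(s,t_i)=r\ell_i p_{\ell_i}F_{T-t_i}(s)^{\ell_i-1}$, which is exactly the integrand appearing in \eqref{eq:ellrate}, the matter reduces to matching the prefactors once the mixing variable $S$ has been integrated out.

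First I would unpack the mixed-Poisson recipe. Conditionally on $S=s$, the superposition over $\ell$ of the independent Poisson processes with intensities $r_\ell(s,\cdot)$ is a single Poisson process on $[0,T]$, so the probability of seeing exactly the marked points $(t_1,\ell_1),\dots,(t_j,\ell_j)$ is $\big(\prod_{i=1}^j r_{\ell_i}(s,t_i)\,\mathrm{d}t_i\big)\exp\!\big(-\int_0^T \lambda(s,t)\,\mathrm{d}t\big)$, where $\lambda(s,t):=\sum_{\ell\ge 1} r_\ell(s,t)$. The heart of the argument is the evaluation of this void factor. Summing the series gives $\lambda(s,t)=r\sum_{\ell\ge 1}\ell p_\ell F_{T-t}(s)^{\ell-1}=r f'(F_{T-t}(s))$, so after the substitution $v=T-t$ one needs $\int_0^T r f'(F_v(s))\,\mathrm{d}v$. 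Differentiating the backward equation of \eqref{eq:kolmogorov}, namely $\partial_t F_t(s)=r\big(f(F_t(s))-F_t(s)\big)$, with respect to $s$ yields $\partial_t F_t'(s)=r\big(f'(F_t(s))-1\big)F_t'(s)$, that is $\partial_t\log F_t'(s)=r\big(f'(F_t(s))-1\big)$; integrating over $[0,T]$ and using $F_0(s)=s$, so that $F_0'(s)=1$, one gets $\int_0^T r f'(F_v(s))\,\mathrm{d}v=\log F_T'(s)+rT$. Hence the void factor equals $e^{-rT}/F_T'(s)$.

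Finally I would integrate against the law of $S$, whose density on $[0,1]$ is proportional to $F_T'(s)$ with normalising constant $\int_0^1 F_T'(s)\,\mathrm{d}s=1-F_T(0)$, precisely the survival probability on which we conditioned. The factor $F_T'(s)$ in the density cancels the $1/F_T'(s)$ in the void factor, and the configuration probability collapses to $\frac{e^{-rT}}{1-F_T(0)}\int_0^1\big(\prod_{i=1}^j r_{\ell_i}(s,t_i)\,\mathrm{d}t_i\big)\,\mathrm{d}s$, which is \eqref{eq:ellrate}. Two routine technical points remain: the interchange of the $\mathrm{d}s$-integral with the Poisson expectation is justified by Tonelli's theorem, and the termwise summation defining $\lambda$ by monotone convergence; for $s<1$ one has $F_{T-t}(s)<1$ so no boundary issue arises, and if the offspring mean is infinite then $F_T'$ merely blows up at the null point $s=1$.

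The step I expect to be the real obstacle is the void-factor identity, namely seeing that the integrated spine intensity $\int_0^T r f'(F_v(s))\,\mathrm{d}v$ telescopes, through the $s$-derivative of the backward Kolmogorov equation, to $\log F_T'(s)+rT$. Granting that, Theorems~\ref{main} and \ref{main2} are two readings of the same formula, and the rest is bookkeeping of Poisson point processes together with a Fubini interchange.
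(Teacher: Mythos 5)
Your proof is correct and takes essentially the same route as the paper: the void-factor identity $\int_0^T r f'(F_{T-t}(s))\,\mathrm{d}t = rT + \log F_T'(s)$, which you derive by differentiating the backward Kolmogorov equation in $s$, is precisely the paper's Lemma~\ref{lem:totalrate}, and the remaining cancellation of $F_T'(s)$ against the mixing density matches the paper's argument (note the density should be normalised by $1-F_T(0)$, as you correctly use, rather than $1-F_T(s)$ as in the theorem statement).
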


Properties of $F_t(s)$ in conjunction with \eqref{eq:ellrate} or \eqref{eq:prate} enable an immediate qualitative comment on the reproduction rates along the uniform ancestral lineage.
Write $m := \sum_{k \geq 0} kp_k$ for the mean of the offspring distribution. We say the tree is supercritical (respectively critical, subcritical) if $m >1$ (resp.\ $m=1$, $m<1$).
In the supercritical case with no deaths (i.e.\ $p_0=0$), the function $F_t(s)$ is monotone decreasing in the $t$ variable. (This can be seen from \eqref{eq:kolmogorov}.) As such, we then see from \eqref{eq:ellrate} that for each $\ell \geq 2$, the rate of reproduction events of size $\ell$ is increasing for $t \in [0,T]$ along the uniform ancestral lineage. Conversely, in the subcritical case, $F_t(s)$ is monotone increasing in the $t$ variable, and consequently, the rate of size $\ell \geq 2$ reproduction events is decreasing for $t \in [0,T]$ along the uniform ancestral lineage.

Further implications of Theorem \ref{main} will be seen in Sections \ref{countsec} and \ref{Ratesec}, where we shall determine the total number of reproduction events and the local rate of reproduction along the ancestral lineage, both of which are readily computable for birth-death and heavy-tailed branching processes. The remainder of the present section is dedicated to the proof of Theorem \ref{main}.

First we establish notation for the proof.  We follow the Ulam-Harris labelling system, in which each cell in the BGW process is associated with a label in $\mathcal{T}=\cup_{n=0}^\infty\mathbb{N}^n$. The first cell, born at time zero, is labelled by $\varnothing$, the empty word. When the initial cell $\varnothing$ dies and has $k \geq 0$ children, these children are labelled $(1),\ldots,(k)$. More generally, when a cell associated with a label $u = (u_1,\ldots,u_m)$ dies and has $k$ children, these children are labelled by the $k$ concatenations $(u1),\ldots,(uk)$. The set $\mathcal{T}$ is endowed with the partial ordering $\prec$ defined by $(u_1,..,u_m)\prec(v_1,..,v_n)$ if and only $m< n$ and $(u_1,..,u_m)=(v_1,..,v_m)$.  Write $\preceq$ for $\prec$ or $=$. In words, $u\prec v$ means that $u$ is ancestral to $v$. 

Let $(L_v : v \in \mathcal{T})$ be a collection of i.i.d.  non-negative integer-valued random values each distributed as $(p_k)$,  representing the children numbers of each cell, and let $(\tau_v:v\in\mathcal{T})$ be a collection of i.i.d. Exponential random variables with rate $r$, representing the cells' lifespans. Cell $v$ has birth time $\sigma_v=\sum_{u\prec v}\tau_u$ and reproduction time $\rho_v=\sum_{u\preceq v}\tau_u$. The cells alive at time $t\geq0$ are given by the set
\ba
\mathcal{N}_t=\left\{v=(v_1,..,v_n)\in\mathcal{T}:t\in[\sigma_v,\rho_v)\text{ and }v_i\leq L_{(v_1,..,v_{i-1})}\text{ for }i=1,..,n \right\},
\ea
and the number of cells at time $t$ is denoted $N_t=|\mathcal{N}_t|$.

For each $v\in\mathcal{T}$, we define a point process
$$
R_v=\sum_{u\prec v}\delta_{(L_u,\rho_u)}
$$
{characterising reproduction on the ancestral lineage of $v$. This point process is a measure on $\mathbb{N} \times [0,\rho_v]$.

On the event that $N_T$ is positive, sample uniformly a cell $V_T$ from $\mathcal{N}_T$. The purpose of Theorem \ref{main} is to give an explicit characterisation of the distribution of $R_{V_T}$.

In the spirit of branching process spine arguments, we now define a random sequence of elements running forwards through the population tree $\cup_{t\geq0}\mathcal{N}_t$.  First, on the event that the first cell $\varnothing$ has $k\geq1$ children, let $\mu_1$ be uniformly sampled from the set $\{1,..,k\}$ of $\varnothing$'s children. Continuing in this vein, on the event that the cell $(\mu_1,\ldots,\mu_n)$ has $k \geq 1$ children, we let $(\mu_1,..,\mu_n,\mu_{n+1})$ be uniformly sampled from the children $(\mu_1,\ldots,\mu_n,1),\ldots,(\mu_1,\ldots,\mu_n,k)$ of $(\mu_1,\ldots,\mu_n)$. The sequence $\{(\mu_1,..,\mu_n):n=0,1,2,..\}$ is the \textit{spine}, which can be mapped to an element $W_T$ of the population at time $T$. On the event that the set $\mathcal{W}_T=\mathcal{N}_T\cap\{(\mu_1,..,\mu_n):n=0,1,2,..\}$ is non-empty, let $W_T$ be the unique element of $\mathcal{W}_T$. Note however that if the spine follows a lineage which dies then $\mathcal{W}_T$ may be empty even if $\mathcal{N}_T$ is non-empty,  so $W_T$ is not always defined. We shall write $\{W_T\in\mathcal{N}_T\}$ for the event that $W_T$ is defined.

We have so far defined two random elements of $\mathcal{N}_T$. On the one hand we have $V_T$ chosen uniformly from the population $\mathcal{N}_T$ at time $T$.  On the other hand, we have the spine element $W_T$ defined by following uniformly chosen children through the tree forwards in time which leads to a non-uniform distribution on the elements of $\mathcal{N}_T$. Importantly, while reproduction along the ancestral lineage of $V_T$ may not reflect that of typical members of the population, reproduction along the ancestral lineage of $W_T$ coincides with the law of reproduction in the population. Our next result is the chief tool in our approach, characterising the relationship between reproduction on the ancestral lineages of $V_T$ and $W_T$.

\begin{lemma}\label{spineidentity}
For any measurable function $G$ from the space of point processes on $\mathbb{N}\times[0,T]$ to $[0,\infty)$,
\begin{align*}
\mathbb{E}\left[\mathrm{1}_{\{N_T \geq 1\}}G(R_{V_T})\right] =\mathbb{E}\left[\frac{\mathrm{1}_{\{W_T \in \mathcal{N}_T\}}}{N_{T}}\left(\prod_{v\prec W_T}L_v\right)G(R_{W_T}) \right].
\end{align*}

\end{lemma}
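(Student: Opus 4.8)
The plan is to condition on the entire genealogical structure and compute both sides by summing over which leaf of $\mathcal{N}_T$ is picked out. First I would set up the common probability space: the right-hand side is an expectation over the tree $(L_v)$, the lifetimes $(\tau_v)$, \emph{and} the extra uniform choices $(\mu_n)$ defining the spine; the left-hand side is an expectation over $(L_v),(\tau_v)$ together with the uniform choice of $V_T \in \mathcal{N}_T$. The key observation is that the extra randomness used to generate the spine can be bookkept by writing, for any cell $w \in \mathcal{N}_T$,
\begin{align*}
\mathbb{P}\big(W_T = w \,\big|\, (L_v),(\tau_v)\big) = \prod_{v \prec w} \frac{1}{L_v},
\end{align*}
since the spine reaches $w$ precisely by choosing, at each ancestor $v \prec w$ (each of which has $L_v \geq 1$ children because it has a descendant alive at time $T$), the one child that is itself an ancestor of $w$ — an event of probability $1/L_v$, independently across the ancestors. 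Hence, peeling off the spine randomness,
\begin{align*}
\mathbb{E}\left[\frac{\mathbf{1}_{\{W_T \in \mathcal{N}_T\}}}{N_T}\Big(\prod_{v \prec W_T} L_v\Big) G(R_{W_T})\right]
= \mathbb{E}\left[\frac{1}{N_T} \sum_{w \in \mathcal{N}_T} \Big(\prod_{v \prec w}\frac{1}{L_v}\Big)\Big(\prod_{v \prec w} L_v\Big) G(R_w)\right].
\end{align*}

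Second, I would observe that the two products cancel exactly: $\big(\prod_{v \prec w} L_v^{-1}\big)\big(\prod_{v \prec w} L_v\big) = 1$, so the right-hand side collapses to $\mathbb{E}\big[\tfrac{1}{N_T}\sum_{w \in \mathcal{N}_T} G(R_w)\big]$. Third, I would compute the left-hand side the same way: conditionally on $(L_v),(\tau_v)$ with $N_T \geq 1$, the cell $V_T$ is uniform on $\mathcal{N}_T$, so $\mathbb{E}[\mathbf{1}_{\{N_T\geq 1\}} G(R_{V_T}) \mid (L_v),(\tau_v)] = \tfrac{1}{N_T}\sum_{w \in \mathcal{N}_T} G(R_w)$ on $\{N_T \geq 1\}$ (and both sides vanish on $\{N_T = 0\}$, where also $\{W_T \in \mathcal{N}_T\}$ cannot occur). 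Taking expectations matches the two sides. The point of the $\prod L_v$ weight on the right is exactly to undo the spine's bias: traversing the tree by uniform-child choices overweights cells with few-child ancestors by $\prod 1/L_v$, and the correction factor $\prod L_v$ restores the uniform measure on $\mathcal{N}_T$.

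The main obstacle is purely a matter of care rather than depth: one must make precise the measurability and the conditional independence of the spine choices from the tree, and handle the event $\{W_T \notin \mathcal{N}_T\}$ (the spine follows a lineage that has died out by time $T$) correctly — on that event the indicator kills the contribution, which is consistent with the fact that such $w$ simply do not appear in the sum $\sum_{w \in \mathcal{N}_T}$. A clean way to organize this is to enlarge the probability space once and for all with an i.i.d.\ family of uniform $[0,1]$ variables attached to the edges of $\mathcal{T}$, use them both to define the spine and (independently) nothing else, and then everything reduces to the deterministic identity $\prod_{v\prec w} L_v^{-1} \cdot \prod_{v \prec w} L_v = 1$ cell by cell. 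No genuine analytic difficulty arises; the lemma is a combinatorial change-of-measure bookkeeping statement.
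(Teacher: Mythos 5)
Your proposal is correct and follows essentially the same route as the paper's proof: both condition on the sigma-algebra generated by $(L_v,\tau_v)$, use the identity $\mathbb{P}(W_T=w\mid\mathcal{F})=\prod_{v\prec w}L_v^{-1}$ for $w\in\mathcal{N}_T$, and let the weight $\prod_{v\prec W_T}L_v$ cancel the spine's bias so that both sides reduce to $\mathbb{E}\bigl[\tfrac{1}{N_T}\sum_{w\in\mathcal{N}_T}G(R_w)\bigr]$. The only cosmetic difference is direction: the paper rewrites $\mathrm{1}_{\{v\in\mathcal{N}_T\}}$ as a conditional expectation to pass from left to right, whereas you expand the spine expectation to pass from right to left.
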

\begin{proof}Let $\mathcal{F}$ be the sigma-algebra generated by the random variables $(L_v,\tau_v:v\in\mathcal{T})$ defining the evolution of the population. First we look at the event $\{W_T=v\}$ conditional on $\mathcal{F}$. As the spine traverses from the initial individual $\varnothing$ through to $v$, at each reproduction event it must choose the `correct' child to follow in order for $\{W_T=v\}$ to hold, each choice being made correctly with probability given by one divided by the number of children. So 
$$\mathbb{P}[W_T=v|\mathcal{F}]=\frac{\mathrm{1}_{\{v \in \mathcal{N}_T,\prod_{u\prec v}L_u>0\}}}{\prod_{u\prec v}L_u}.$$
By rearranging, we obtain
$$
\mathrm{1}_{\{v \in \mathcal{N}_T\}}=\mathbb{P}\left[\mathrm{1}_{\{W_T=v\}}\prod_{u\prec v}L_u\Big|\mathcal{F}\right].
$$
It follows that for measurable functions $G$ on the space of point measures on $\mathbb{N} \times [0,T]$.
\begin{align*}
\frac{\mathrm{1}_{\{N_T \geq 1\}}}{N_{T}}\sum_{v\in\mathcal{N}_{T}}G(R_v)
&=\frac{\mathrm{1}_{\{N_T \geq 1\}}}{N_{T}}\sum_{v\in\mathcal{T}}\mathbb{P}\left[\mathrm{1}_{\{W_T=v\}}\prod_{u\prec v}L_u\Big|\mathcal{F}\right]G(R_v)\\
&=\mathbb{P}\left[\frac{\mathrm{1}_{\{N_T \geq 1\}}}{N_{T}}\sum_{v\in\mathcal{T}}\mathrm{1}_{\{W_T=v\}}\left(\prod_{u\prec W_T}L_u\right)G(R_{W_T})\Big|\mathcal{F}\right]\\
&=\mathbb{P}\left[\frac{\mathrm{1}_{\{W_T \in \mathcal{N}_T\}}}{N_{T}}\left(\prod_{u\prec W_T}L_u\right)G(R_{W_T})\Big|\mathcal{F}\right].
\end{align*}
Take expectations to obtain the result.
\end{proof}
Lemma \ref{spineidentity} says that the relationship between the ancestral lineages of $V_T$ and $W_T$ depends on the reciprocal of the population size at time $T$. The next result makes sense of the population size as the sum of subpopulations descending from elements of the spine. 
\begin{lemma}\label{summingspine}
We have
\begin{align*}
\mathbb{E}\left[\frac{\mathrm{1}_{\{W_T \in \mathcal{N}_T\}}}{N_{T}}\Big|\mathcal{G}\right]=\mathrm{1}_{\{W_T \in \mathcal{N}_T\}}\int_0^1\prod_{v\prec W_T}F_{T-\rho_v}(s)^{L_v-1}ds,
\end{align*}
where $\mathcal{G}$ is the sigma-algebra generated by $(L_v,\tau_v:v\prec W_T)$ (which carries information on the identity of the spine and reproduction along the spine).
\end{lemma}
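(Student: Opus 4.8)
The plan is to condition on $\mathcal{G}$ and decompose the time-$T$ population into the spine cell $W_T$ together with the subpopulations that branch off the spine before time $T$. On the event $\{W_T \in \mathcal{N}_T\}$, list the spine cells up to and including $W_T$ as $\varnothing = w_0 \prec w_1 \prec \cdots \prec w_n = W_T$. For each $i < n$ the cell $w_i$ reproduces at time $\rho_{w_i}$ into $L_{w_i} \geq 1$ children; exactly one of them, $w_{i+1}$, continues the spine, while the other $L_{w_i} - 1$ children each root an independent BGW subtree initiated at time $\rho_{w_i}$. Every cell alive at time $T$ either equals $W_T$ or is a descendant of one of these off-spine children, and these possibilities are mutually exclusive, so
\begin{align*}
N_T = 1 + \sum_{v \prec W_T} \sum_{j=1}^{L_v - 1} Z_{v,j},
\end{align*}
where $Z_{v,j}$ denotes the number of cells alive at time $T$ in the $j$-th off-spine subtree attached at $v$. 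I would then argue, using the branching property and the fact that $\mathcal{G}$ records only the data $(L_v,\tau_v)$ for $v \prec W_T$ (hence also the reproduction times $\rho_v$), that conditionally on $\mathcal{G}$ the family $\{Z_{v,j}\}$ consists of independent random variables, with each $Z_{v,j}$ having the law of $N_{T-\rho_v}$; in particular $\mathbb{E}[s^{Z_{v,j}} \mid \mathcal{G}] = F_{T-\rho_v}(s)$.

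Next I would invoke the elementary identity $1/m = \int_0^1 s^{m-1}\,\mathrm{d}s$, valid for every integer $m \geq 1$, which applies here since $N_T \geq 1$ on $\{W_T \in \mathcal{N}_T\}$. Writing $1/N_T = \int_0^1 s^{N_T - 1}\,\mathrm{d}s$, taking conditional expectation given $\mathcal{G}$, and interchanging the conditional expectation with the integral by Tonelli (the integrand is non-negative), the task reduces to computing the conditional generating function $\mathbb{E}[s^{N_T - 1} \mid \mathcal{G}]$. Substituting $N_T - 1 = \sum_{v \prec W_T} \sum_{j=1}^{L_v - 1} Z_{v,j}$ and using the conditional independence above gives, on $\{W_T \in \mathcal{N}_T\}$,
\begin{align*}
\mathbb{E}\big[ s^{N_T - 1} \,\big|\, \mathcal{G} \big] = \prod_{v \prec W_T} \prod_{j=1}^{L_v - 1} F_{T - \rho_v}(s) = \prod_{v \prec W_T} F_{T - \rho_v}(s)^{L_v - 1},
\end{align*}
and reinstating the indicator $\mathrm{1}_{\{W_T \in \mathcal{N}_T\}}$ yields the claimed formula.

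The hard part will be the bookkeeping behind the conditional independence claim. One must check that the event $\{W_T \in \mathcal{N}_T\}$ and the times $\rho_v$ for $v \prec W_T$ are $\mathcal{G}$-measurable, and --- the real subtlety --- that conditioning on $\mathcal{G}$ does not bias the off-spine subtrees. The event generating $\mathcal{G}$ includes the requirement that a sequence of uniformly chosen children happened to trace out exactly the lineage of $W_T$, and one has to verify that this requirement constrains only the spine cells' reproduction data and the (independent) uniform choices, leaving each off-spine subtree an untouched, independent copy of the BGW process rooted at the relevant branch time. This is a routine but slightly delicate appeal to the Ulam--Harris construction and the branching property; once it is in place, the short computation above completes the proof.
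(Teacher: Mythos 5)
Your proposal is correct and follows essentially the same route as the paper: decompose $N_T = 1 + \sum$ of off-spine subtree sizes, write $1/N_T = \int_0^1 s^{N_T-1}\,\mathrm{d}s$, and use the conditional independence of the off-spine subtrees given $\mathcal{G}$ to factor the conditional generating function into $\prod_{v\prec W_T}F_{T-\rho_v}(s)^{L_v-1}$. The conditional-independence bookkeeping you flag as the delicate point is exactly the step the paper asserts without elaboration, so your treatment is, if anything, slightly more careful.
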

\begin{proof}[Proof of Lemma \ref{summingspine}]
For any $w\in\mathcal{N}_T$, we can write $$N_T=1+\sum_{v\prec w}\sum_{u\in\text{sis}(v)}N_{T,u},$$
where sis$(v)$ is the set of sisters of $v$ and $N_{T,u}$ is the number of descendants of $u$ alive at time $T$, and hence
$$
\frac{\mathrm{1}_{\{W_T \in \mathcal{N}_T\}}}{N_{T}}=\frac{\mathrm{1}_{\{W_T \in \mathcal{N}_T\}}}{1+\sum_{v\prec W_T}\sum_{u\in\text{sis}(v)}N_{T,u}}.
$$
Then using the fact that $\int_0^1 s^k \mathrm{d}s = \frac{1}{k+1}$,
\begin{eqnarray}\frac{\mathrm{1}_{\{W_T \in \mathcal{N}_T\}}}{N_{T}}&=&\mathrm{1}_{\{W_T \in \mathcal{N}_T\}}\int_0^1s^{\sum_{v\prec W_T}\sum_{u\in\text{sis}(v)}N_{T,u}}ds\nonumber\\
&=&\mathrm{1}_{\{W_T \in \mathcal{N}_T\}}\int_0^1\prod_{v\prec W_T}\prod_{u\in\text{sis}(v)}s^{N_{T,u}}ds.\label{conde}
\end{eqnarray}
But for $v\prec W_T$ and $u\in\text{sis}(u)$,  the $N_{T,u}$ are conditionally independent given $\mathcal{G}$ and are distributed as $N_{T-\rho_{m(u)}}$, where $m(u)$ is $u$'s mother.  So taking the expectation of (\ref{conde}) conditional on $\mathcal{G}$ gives the result.
\end{proof}

Now we have the ingredients to complete the proof of Theorem \ref{main}.
\begin{proof}[Proof of Theorem \ref{main}]

By Lemma \ref{spineidentity},
\ba
\mathbb{E}\left[\mathrm{1}_{\{N_T \geq 1\}}G\left(R_{V_T}\right)\right]=\mathbb{E}\left[\frac{\mathrm{1}_{\{W_T \in \mathcal{N}_T\}}}{N_{T}}\left(\prod_{v\prec W_T}L_v\right)G(R_{W_T}) \right].\ea
Then using the tower rule along with Lemma \ref{summingspine} gives that
\begin{align} \label{eq:makela}
\mathbb{E}\left[\mathrm{1}_{\{N_T \geq 1\}}G\left(R_{V_T}\right)\right]=\mathbb{E}\left[\mathrm{1}_{\{W_T \in \mathcal{N}_T\}}G(R_{W_T})\int_0^1\prod_{v\prec W_T}L_vF_{T-\rho_v}(s)^{L_v-1}ds\right].
\end{align}
As mentioned above,  reproduction along the ancestral lineage of the spine by definition occurs at the same rate as the natural rate in the population. Thus we have the simple relation
\begin{align} \label{eq:makela2}
&\mathbb{P} \left( W_T \in \mathcal{N}_T, \text{The lineage of $W_T$ has $j$ reproductions of sizes $\ell_1,\ldots,\ell_j$ in $[t_1+\mathrm{d}t_1),\ldots,[t_j,t_j+\mathrm{d}t_j)$} \right) \nonumber \\
&= r^j e^{-rT} \prod_{i=1}^j p_{\ell_i} \mathrm{d}t_i.
\end{align}
Plugging \eqref{eq:makela2} into the right-hand-side of \eqref{eq:makela}, we obtain
\begin{align*} 
\mathbb{E}\left[\mathrm{1}_{\{N_T \geq 1\}}G\left(R_{V_T}\right)\right]= r^j e^{-rT} \prod_{i=1}^j p_{\ell_i} \mathrm{d}t_i \int_0^1 \prod_{i=1}^j \ell_i F_{T-t_i}(s)^{\ell_i-1} \mathrm{d}s,
\end{align*}
thereby completing the proof of Theorem \ref{main}. 
\end{proof}

We now turn to the proof of Theorem \ref{main2}. 
We begin with the following lemma.

\begin{lemma} \label{lem:totalrate}
We have
\begin{align} \label{eq:tchaik8}
r\int_0^T f'(F_{T-t}(s)) \mathrm{d}t &= rT + \log F_T'(s) .
\end{align}
\end{lemma}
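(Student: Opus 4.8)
The plan is to derive a first-order linear ODE in $t$ for the logarithmic $s$-derivative of $F_t(s)$, integrate it, and then perform a trivial change of variables. Recall from \eqref{eq:kolmogorov} the backward Kolmogorov equation $\partial_t F_t(s) = u(F_t(s)) = r\bigl(f(F_t(s)) - F_t(s)\bigr)$, with initial condition $F_0(s)=s$.

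First I would differentiate the backward equation with respect to $s$. Writing $F_t'(s) = \partial_s F_t(s)$ and interchanging the $s$- and $t$-derivatives (justified since $F_t(s)$ is smooth in both variables on $[0,1]$, being a power series in $s$ with non-negative coefficients summing to at most $1$, and solving a smooth ODE in $t$), the chain rule gives
\begin{align*}
\partial_t F_t'(s) = r\bigl(f'(F_t(s)) - 1\bigr) F_t'(s).
\end{align*}
Since $F_t'(s) > 0$ for $s \in (0,1)$, we may divide through to obtain
\begin{align*}
\partial_t \log F_t'(s) = r f'(F_t(s)) - r.
\end{align*}

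Next I would integrate this identity in $t$ from $0$ to $T$. Using $F_0(s) = s$, hence $F_0'(s) = 1$ and $\log F_0'(s) = 0$, the fundamental theorem of calculus yields
\begin{align*}
\log F_T'(s) = r\int_0^T f'(F_t(s))\,\mathrm{d}t - rT.
\end{align*}
Finally, the substitution $t \mapsto T - t$ leaves $\int_0^T f'(F_t(s))\,\mathrm{d}t = \int_0^T f'(F_{T-t}(s))\,\mathrm{d}t$ unchanged, and rearranging gives exactly \eqref{eq:tchaik8}. I do not anticipate a serious obstacle here; the only point requiring a word of care is the interchange of partial derivatives and the positivity of $F_t'(s)$ needed to take the logarithm, both of which are standard for generating functions of continuous-time branching processes. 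The genuinely useful observation is simply that differentiating the \emph{backward} equation (rather than the forward one) in $s$ produces a closed linear equation for $F_t'(s)$.
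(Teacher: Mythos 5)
Your proposal is correct and follows essentially the same route as the paper: both differentiate the backward Kolmogorov equation in $s$ to get the closed linear equation $\partial_t F_t'(s) = u'(F_t(s))F_t'(s)$, recognise the logarithmic derivative, integrate using $F_0'(s)=1$, and apply the substitution $t \mapsto T-t$. The only difference is cosmetic — you integrate the ODE for $\log F_t'(s)$ directly, whereas the paper manipulates the integral $r\int_0^T f'(F_{T-t}(s))\,\mathrm{d}t$ step by step to the same end.
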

\begin{proof}
Towards calculating \eqref{eq:tchaik5}, recall from \eqref{eq:kolmogorov} that setting $u(s) = r(f(s)-s)$ we have $\partial/\partial t F_t(s) = u(F_t(s))$. It follows that
\begin{align} \label{eq:dde}
\frac{\partial}{\partial s} \frac{ \partial }{ \partial t} F_t(s) = u'(F_t(s)) \frac{\partial}{\partial s} F_t(s),
\end{align}
and then
\begin{align*} 
r\int_0^T f'(F_{T-t}(s)) \mathrm{d}t &= r\int_0^T f'(F_{t}(s)) \mathrm{d}t  \nonumber\quad &\text{(changing variable $t \mapsto T-t$)}    \\
&= rT + \int_0^T u'(F_{T-t}(s)) \mathrm{d}t   \nonumber  \quad&\text{(using $rf'(s) = r + u'(s)$)} \\
&= rT + \int_0^T \frac{ \frac{\partial}{\partial s} \frac{ \partial }{ \partial t} F_t(s)  }{ \frac{\partial}{\partial s} F_t(s) }  \mathrm{d}t   \nonumber  \quad &\text{(applying \eqref{eq:dde})}\\
&= rT + \int_0^T \frac{\partial}{\partial t} \log \left( \frac{\partial}{\partial s}F_t(s) \right) \mathrm{d}t.
\end{align*}
Performing the integral and using that $F_0(s) = s$ and hence $F_0'(s) = 1$, we obtain
\begin{align*}
r\int_0^T f'(F_{T-t}(s)) \mathrm{d}t &= rT + \log F_T'(s) .
\end{align*}
\end{proof}

We are now equipped to prove Theorem \ref{main2}.

\begin{proof}[Proof of Theorem \ref{main2}]
We verify that the Cox process in question has the same event probabilities as those given in Theorem \ref{main}.

We begin by noting that since $F_T(1) = 1$, that $F_T'(s)/(1-F_T(0))$ clearly integrates to $1$ for $s \in [0,1]$. 

Consider now the following general fact. If we a time inhomogenous Poisson process with intensity $\lambda:[0,T] \to [0,\infty)$, then the probability this process has its events occuring in $\mathrm{d}t_1,\ldots,\mathrm{d}t_j$ is given by 
\begin{align*}
\exp \left\{ - \int_0^T \lambda(t) \mathrm{d}t \right\} \prod_{i=1}^j \lambda(t_i) \mathrm{d}t_i.
\end{align*}
Consequently, if conditionally on $S$ and independently for each $\ell$, size $\ell$ reproduction events occur along the uniform ancestral linage according to a Poisson process of rate $r_\ell(S,t)$ given in \eqref{eq:prate}, then the conditional probability of $j$ total reproduction events of sizes $\ell_1,\ldots,\ell_j$ in $\mathrm{d}t_1, \ldots,\mathrm{d}t_j$ is given by 
\begin{align*}
\exp\left\{ - \sum_{ \ell \geq 1} \int_0^T r \ell p_\ell F_{T-t}(S)^{\ell-1} \mathrm{d}t \right\} \prod_{i=1}^j r \ell_i p_{\ell_i} F_{T-t_i}(S)^{\ell_i-1}\mathrm{d}t_i.
\end{align*}
Now interchanging the order of summation and integration and thereafter using the fact that $f'(s) = \sum_{\ell \geq 1} \ell p_\ell s^{\ell-1}$ to obtain the first equality below, and Lemma \ref{lem:totalrate} to obtain the second, we have
\begin{align*}
\exp\left\{ - \sum_{ \ell \geq 1} \int_0^T r \ell p_\ell F_{T-t}(S)^{\ell-1} \mathrm{d}t \right\} = \exp\left\{ -r \int_0^T f'\left( F_{T-t}(S) \right) \mathrm{d}t \right\} = e^{-rT}/F_T'(S).
\end{align*}
It follows that the probabilities associated with the Cox process described are given by 
\begin{align*}
\int_0^1 \frac{F_T'(s)}{1-F_T(0)} ~ \frac{e^{-RT}}{F_T'(s)} \prod_{i=1}^j r \ell_i p_{\ell_i} F_{T-t_i}(S)^{\ell_i-1}\mathrm{d}t_i \mathrm{d}s,
\end{align*}
which agrees exactly with the integral formula in Theorem \ref{main}.
\end{proof}

%%%%%%%%%%%%%%%%%%%%%%%%%
\section{The total number of ancestral reproduction events}\label{countsec}
\subsection{The law of the number of events}
Having determined the full distribution of reproduction events on the ancestral lineage of the cell sampled at time $T$,  we now turn our attention to the total number of reproduction events along the lineage, which we denote by $E_T$. The number $E_T$ can also be thought of as the generation number of the sampled cell.

We note that the generation number of individuals in a branching process can be understood more generally in terms of branching random walks.  A branching random walk is a branching process where each individual has a spatial location chosen according to its parent's location plus some random jump.  A special case is that the population begins with one individual at position zero on the real line, and that every other individual is located one integer to the right of their parent - so an individual's position is exactly their generation number.  Since decades ago, there are results on the large-time spatial distribution of branching random walks \cite{AN}, and thus much is known about the large-time behaviour of the random generation number $E_T$. 

Samuels \cite{samuels} provides an explicit analysis of generation numbers, showing that for an offspring distribution with mean $m=\sum_{k=1}^\infty kp_k$ taking values in $(1,\infty)$,  $E_T$ has asymptotically Gaussian behaviour with expectation $mrT$ and variance $mrT$:
\begin{align} \label{eq:samuels}
\lim_{T \to \infty} \mathbb{P} \left( \frac{E_T - mrT}{ \sqrt{ mrT} } < x \right) = \int_{-\infty}^x \frac{ e^{ - u^2/2}}{ \sqrt{2 \pi }} \mathrm{d}u.
\end{align}
In fact, Samuels' limit theorem holds in the more general setting of age-dependent (and hence non-Markovian) branching processes.  

In the Markovian setting, our next result an explicit formula for the distribution of $E_T$ at finite times $T$.

\begin{proposition}\label{thm:total}
We have 
\begin{align*}% \label{eq:total}
\mathbb{P}( E_T = j | N_T \geq 1) = \frac{e^{-rT}}{\mathbb{P}(N_T \geq 1) } \int_0^1 \frac{ (rT + \log F_T'(s) )^j}{j!} \mathrm{d}s.
\end{align*}
\end{proposition}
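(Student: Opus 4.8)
The plan is to integrate out the sizes $\ell_1,\ldots,\ell_j$ and the times $t_1,\ldots,t_j$ from the joint description in Theorem \ref{main} (or equivalently Theorem \ref{main2}), keeping track only of the total count $j$. Working from Theorem \ref{main2} is cleanest: conditionally on $S=s$, the superposition over all $\ell \geq 1$ of the size-$\ell$ Poisson processes is itself a time-inhomogeneous Poisson process whose total intensity at time $t$ is $\sum_{\ell \geq 1} r\ell p_\ell F_{T-t}(s)^{\ell-1} = r f'(F_{T-t}(s))$, and the number of events $E_T$ of such a process is Poisson with parameter $\Lambda(s) := r\int_0^T f'(F_{T-t}(s))\,\mathrm{d}t$. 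Hence, conditionally on $S=s$, we have $\mathbb{P}(E_T = j \mid S = s) = e^{-\Lambda(s)} \Lambda(s)^j / j!$.

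The key step is then to identify $\Lambda(s)$ and to unwind the conditioning. By Lemma \ref{lem:totalrate}, $\Lambda(s) = rT + \log F_T'(s)$, so $e^{-\Lambda(s)} = e^{-rT}/F_T'(s)$. Since $S$ has density $F_T'(s)/(1-F_T(0))$ on $[0,1]$, integrating over $s$ gives
\begin{align*}
\mathbb{P}(E_T = j \mid N_T \geq 1) = \int_0^1 \frac{F_T'(s)}{1-F_T(0)} \cdot \frac{e^{-rT}}{F_T'(s)} \cdot \frac{(rT + \log F_T'(s))^j}{j!}\,\mathrm{d}s = \frac{e^{-rT}}{1-F_T(0)} \int_0^1 \frac{(rT + \log F_T'(s))^j}{j!}\,\mathrm{d}s,
\end{align*}
and recalling $\mathbb{P}(N_T \geq 1) = 1 - F_T(0)$ yields the claimed formula. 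The cancellation of $F_T'(s)$ between the density of $S$ and the factor $e^{-\Lambda(s)}$ is exactly the mechanism that makes the answer clean.

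Alternatively, one can prove this directly from the integral in \eqref{eq:ellrate} by summing: $\mathbb{P}(E_T = j \mid N_T \geq 1)$ equals $\frac{e^{-rT}}{1-F_T(0)}\int_0^1 \frac{1}{j!}\left(\int_0^T \sum_{\ell \geq 1} r\ell p_\ell F_{T-t}(s)^{\ell-1}\,\mathrm{d}t\right)^j \mathrm{d}s$ once one accounts for the multinomial $\binom{j}{\cdots}$ factors arising from the unordered choice of which of the $j$ events has which size and the symmetrization of the time integrals over the simplex versus the cube (the $1/j!$ from the ordered simplex combines with the sum over size-assignments to rebuild the $j$-th power of the total rate). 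Either way the main — and really only — obstacle is bookkeeping: making sure the combinatorial factors from "$j$ events of unspecified sizes at unspecified ordered times" are handled correctly so that the inner expression becomes precisely $\big(rf'(F_{T-t}(s))\big)$ integrated and raised to the $j$-th power, after which Lemma \ref{lem:totalrate} finishes the computation immediately. I would present the Theorem \ref{main2}/Cox-process route as the main argument since the Poisson-count fact packages all the combinatorics for free.
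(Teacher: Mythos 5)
Your proposal is correct. Your main argument --- viewing $E_T$ as the total count of the Cox process of Theorem \ref{main2}, so that conditionally on $S=s$ it is Poisson with parameter $\Lambda(s)=r\int_0^T f'(F_{T-t}(s))\,\mathrm{d}t = rT+\log F_T'(s)$ by Lemma \ref{lem:totalrate}, then mixing over the density $F_T'(s)/(1-F_T(0))$ and exploiting the cancellation of $F_T'(s)$ --- is precisely the route the paper's proof mentions in its opening sentence (``this may be proved using the representation \ldots as a Cox process'') and then declines to take. The paper instead gives the direct computation from Theorem \ref{main}: summing over $\ell_1,\ldots,\ell_j$ to produce $\prod_i f'(F_{T-t_i}(s))$, symmetrizing the ordered time integral into $\frac{1}{j!}\bigl(\int_0^T f'(F_{T-t}(s))\,\mathrm{d}t\bigr)^j$, and applying Lemma \ref{lem:totalrate}; this is exactly your ``alternative'' sketch, including the simplex-versus-cube bookkeeping you flag. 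Both proofs pivot on the same identity \eqref{eq:tchaik8}. What the Cox route buys is that the Poisson superposition and counting facts absorb the combinatorics for free, and it makes the mixture-of-Poissons structure noted after the proof in \eqref{eq:total2} transparent from the outset; the cost is that it leans on Theorem \ref{main2}, whose own proof already performed equivalent manipulations, whereas the paper's route is self-contained given Theorem \ref{main}. One small point in your favour: you use the density $F_T'(s)/(1-F_T(0))$ for $S$, consistent with the proof of Theorem \ref{main2} and with \eqref{eq:total2}, rather than the $1-F_T(s)$ appearing in the statement of Theorem \ref{main2}, which is evidently a typo.
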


\begin{proof}[Proof of Proposition \ref{thm:total}]
This may be proved using the representation of the events along the uniform ancestral lineage as a Cox process, but here we give a direct proof using Theorem \ref{main}.

Beginning with the statement of Theorem \ref{main} and then summing over $\ell_i$ and integrating over $t_i$, 
\begin{align} \label{eq:tchaik6}
\mathbb{P}(E_T = j , N_T \geq 1) = \sum_{\ell_1,\ldots,\ell_j \geq 1 } \int_{0 < t_1 < \ldots < t_j < T} r^j e^{- rT} \int_0^1 \prod_{i=1}^j \ell_i p_{\ell_i}   F_{T-t_i}(s)^{\ell_i-1} \mathrm{d}s ~  \mathrm{d}t_1 \ldots \mathrm{d}t_j.
\end{align}
The sum over $\ell_i$ is straightforward. Since $f'(s) = \sum_{\ell \geq 1} \ell p_\ell s^{\ell-1}$, \eqref{eq:tchaik6} reduces to
\begin{align*} %\label{eq:tchaik5b}
\mathbb{P}(E_T = j , N_T \geq 1) =  r^j e^{- rT}   \int_{0 < t_1 < \ldots < t_j < T}\int_0^1 \prod_{i=1}^j f'\left(F_{T-t_i}(s) \right) \mathrm{d}s ~  \mathrm{d}t_1 \ldots \mathrm{d}t_j.
\end{align*}
Using the symmetry of the integral in $t_1,\ldots,t_j$, and changing the order of integration, this reduces further to
\begin{align} \label{eq:tchaik5}
\mathbb{P}(E_T = j , N_T \geq 1) =\frac{r^j e^{- rT}}{j!} \int_0^1 \prod_{i=1}^j \left\{  \int_0^T f'\left( F_{T-t_i}(s) \right) \mathrm{d}t_i  \right\} \mathrm{d}s.
\end{align}
Applying \eqref{eq:tchaik8} to \eqref{eq:tchaik5} we obtain 
\begin{align*}% \label{eq:tchaik4}
\mathbb{P}(E_T = j , N_T \geq 1) =\frac{e^{- rT}}{j!} \int_0^1 (rT + \log F_T'(s))^j \mathrm{d}s,
\end{align*}
which, after dividing through by $\mathbb{P}(N_T \geq 1)$ gives Proposition \ref{thm:total}.
\end{proof}

Observe that the statement of Proposition \ref{thm:total} can be rewritten as 
\begin{align} \label{eq:total2}
\mathbb{P}( E_T = j | N_T \geq 1) = \int_0^1 \left\{ e^{ - (rT + \log F_T'(s) )}  \frac{ (rT + \log F_T'(s) )^j}{j!} \right\}  \frac{F_T'(s)}{1 - F_T(0)}  \mathrm{d}s.
\end{align}
The integrand of (\ref{eq:total2}) is the product two terms: one,  the probability mass function of Poisson distribution with mean $rT + \log F_T'(s)$, and two, the function $F_T'(s)/(1-F_T(0))$ which is a probability density on $[0,1]$. The equation \eqref{eq:total2} therefore states that $E_T$ is a mixture of Poisson distributions, that is, $E_T$ has the law of a Poisson random variable with random mean $rT + \log F_T'(S)$, where $S$ is a $[0,1]$-valued random variable distributed according to the density $F_T'(s)/(1-F_T(0))$. 

\subsection{The expected number of events} \label{sec:gap}
As a result of the probabilistic representation \eqref{eq:total2}, we see that the expectation of $E_T$ coincides with the expectation of the random variable $rT + \log F_T'(S)$, that is 
\begin{align} \label{eq:expect}
\mathbb{E}[E_T | N_T \geq 1 ] = \int_0^1 (r T + \log F_T'(s)) \frac{F_T'(s)}{1 - F_T(0)} \mathrm{d}s.
\end{align}
Since $F_T'(s)$ is an increasing function of $s$ and $F_T'(1)=\mathbb{E}[N_T ] = e^{r(m-1)T}$,  (\ref{eq:expect}) is bounded above by
\begin{align*} 
\int_0^1 (r T + \log F_T'(1)) \frac{F_T'(s)}{1 - F_T(0)} \mathrm{d}s= \int_0^1 rmT\frac{F_T'(s)}{1 - F_T(0)} \mathrm{d}s=rmT.
\end{align*}
That is, $\mathbb{E}[E_T | N_T \geq 1 ]\leq rmT$, and according to Samuel's large time limit result (\ref{eq:samuels}), this upper bound is asymptotically attained.  In numerical calculations however we observed unfortunately slow convergence. For the remainder of the section we study the (non-negative) gap.
\begin{align} \label{eq:gap}
\mathrm{Gap}_T := rmT - \mathbb{E}[E_T | N_T \geq 1 ] 
\end{align}
between $rmT$ and the expectation $\mathbb{E}[E_T | N_T \geq 1]$. 
\begin{lemma} \label{lem:Gap}
Under mild conditions, $\mathrm{Gap}_T$ converges to a finite non-negative limit $\mathrm{Gap}_\infty$.
\end{lemma}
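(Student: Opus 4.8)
The plan is to start from the representation of $\mathbb{E}[E_T\mid N_T\geq 1]$ obtained above and massage $\mathrm{Gap}_T$ into a form whose $T\to\infty$ limit is transparent. Since $\log F_T'(1)=\log\mathbb{E}[N_T]=r(m-1)T$, subtracting gives
\[
\mathrm{Gap}_T \;=\; \int_0^1 \log\!\frac{F_T'(1)}{F_T'(s)}\;\frac{F_T'(s)}{1-F_T(0)}\,\mathrm{d}s ,
\]
with non-negative integrand because $F_T'$ is increasing. The key input is the computation behind Lemma~\ref{lem:totalrate}: rearranging \eqref{eq:tchaik8}, and using it also at $s=1$ (where $f'(1)=m$), yields $\log(F_T'(1)/F_T'(s))=r\int_0^T(m-f'(F_t(s)))\,\mathrm{d}t$. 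I would then apply Tonelli, the chain rule $F_T'(s)=F_{T-a}'(F_a(s))\,F_a'(s)$ from the semigroup $F_T=F_{T-a}\circ F_a$, and the substitution $w=F_a(s)$ (a $C^1$ increasing bijection of $[0,1]$ onto $[F_a(0),1]$), arriving at
\[
\mathrm{Gap}_T \;=\; \frac{r}{1-F_T(0)}\int_0^T\!\!\int_{F_{T-a}(0)}^{1}\big(m-f'(w)\big)\,F_a'(w)\,\mathrm{d}w\,\mathrm{d}a ,
\]
where $q:=\lim_{t\to\infty}F_t(0)\in[0,1)$ is the extinction probability; in the supercritical case $u'(q)=r(f'(q)-1)<0$, $u<0$ on $(q,1)$, and $u'(1)=r(m-1)>0$.

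Next I would produce an $a$-integrable dominating function for the inner integral, uniform in $T$. The natural choice is $I^+(a):=\int_0^1(m-f'(w))F_a'(w)\,\mathrm{d}w$, which dominates since the integrand is non-negative, and the claim is that $\int_0^\infty I^+(a)\,\mathrm{d}a<\infty$ under a mild hypothesis such as finite offspring variance. Fixing $\delta$ with $q<1-\delta<1$ and splitting $[0,1]$ there: on $[0,1-\delta]$ the bound $m-f'(w)\leq m$ together with monotonicity of $F_a'$ gives $\leq m\,F_a'(1-\delta)$, and since $F_t(1-\delta)\downarrow q$ with $u'(q)<0$ the representation $F_a'(1-\delta)=\exp\int_0^a u'(F_t(1-\delta))\,\mathrm{d}t$ decays exponentially in $a$; on $[1-\delta,1]$, finite variance gives $m-f'(w)=\int_w^1 f''(v)\,\mathrm{d}v\leq M(1-w)$, and an integration by parts using $F_a(q)=q$ gives the exact identity $\int_q^1(1-w)F_a'(w)\,\mathrm{d}w=\int_q^1(F_a(w)-q)\,\mathrm{d}w$, whence, substituting $x=F_a(w)$ (so $\mathrm{d}a=\mathrm{d}x/u(x)$), $\int_0^\infty(F_a(w)-q)\,\mathrm{d}a=\int_q^w\frac{x-q}{-u(x)}\,\mathrm{d}x$. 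The latter is bounded near $w=q$ and, since $-u(x)\sim r(m-1)(1-x)$ as $x\uparrow1$, grows only like $\log\tfrac{1}{1-w}$ as $w\uparrow1$, hence is integrable over $w\in(q,1)$.

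With the uniform domination in hand the lemma follows by dominated convergence: for each fixed $a$, $F_{T-a}(0)\uparrow q$ and $1-F_T(0)\to1-q$ as $T\to\infty$, so
\[
\mathrm{Gap}_T \;\longrightarrow\; \mathrm{Gap}_\infty \;:=\; \frac{r}{1-q}\int_0^\infty\!\!\int_q^1\big(m-f'(w)\big)\,F_a'(w)\,\mathrm{d}w\,\mathrm{d}a \;\in\;[0,\infty),
\]
the non-negativity inherited from $\mathrm{Gap}_T\geq 0$.

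I expect the crux to be the integrability estimate near $w=1$: $\int_0^\infty(F_a(w)-q)\,\mathrm{d}a$ diverges as $w\uparrow1$, and only the borderline logarithmic rate of this divergence — dictated by the local behaviour of $u$ at the fixed point $1$ — keeps $\int_0^\infty I^+(a)\,\mathrm{d}a$ finite; this is exactly where the ``mild conditions'' (a second moment, so that $f''$ stays bounded near $1$) enter, and pinning down the behaviour of $u$ at both fixed points $q$ and $1$ is the delicate part. A minor point is that the domination must be genuinely uniform in $T$ — in particular for $a$ near $T$, where $F_{T-a}(0)$ is near $0$ rather than near $q$ — but this is absorbed automatically by dominating with the full $I^+(a)$. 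The subcritical and critical regimes would need a parallel argument with the roles of the two fixed points interchanged (and $u'(1)=0$ at criticality, changing the decay rates); since the comparison $rmT$ and Samuels' limit \eqref{eq:samuels} are stated for $m\in(1,\infty)$, I would present the supercritical case.
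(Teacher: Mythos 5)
Your supercritical argument is correct, but it takes a genuinely different route from the paper. The paper changes variables to the Laplace transform $\varphi_T(\theta)=\mathbb{E}[Z_Te^{-\theta Z_T}]$ of the Kesten--Stigum martingale $Z_T=N_Te^{-r(m-1)T}$ and invokes the classical limit theorems ($Z_T\to Z$ under the $k\log k$ condition, Yaglom's theorem in the critical case, the ratio limit theorem in the subcritical case), asserting convergence of the resulting integrals; you instead stay entirely at the level of the generating-function ODE, rewriting $\mathrm{Gap}_T$ via Lemma \ref{lem:totalrate}, Tonelli and the semigroup chain rule as $\frac{r}{1-F_T(0)}\int_0^T\int_{F_{T-a}(0)}^1(m-f'(w))F_a'(w)\,\mathrm{d}w\,\mathrm{d}a$, and then closing with dominated convergence against an explicit, $T$-uniform majorant. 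Your route buys a self-contained proof with an honest justification of the limit interchange (which the paper's sketch glosses over) and an explicit formula for $\mathrm{Gap}_\infty$ in terms of $f$ and the flow $F_a$; the cost is that you need a second moment (to get $m-f'(w)\leq M(1-w)$ near $w=1$) where the paper needs only $\sum_k p_k k\log k<\infty$. The more substantive limitation is coverage: you prove the lemma only in the supercritical case, and the critical case is not a routine adaptation of your scheme --- there $q=1$ and $u'(1)=0$, so both of your key estimates (exponential decay of $F_a'(1-\delta)$ via $u'(q)<0$, and the logarithmic bound on $\int_q^w\frac{x-q}{-u(x)}\,\mathrm{d}x$ via $-u(x)\sim r(m-1)(1-x)$) degenerate, the interval $(q,1)$ collapses, and $1-F_T(0)\to 0$ forces a Yaglom-type rescaling. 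Since the paper's critical-case computation yields the universal constant $\mathrm{Gap}_\infty=2$ that is actually used later (the mean correction in Figure \ref{CLTfig}), a complete proof along your lines would still need that separate analysis.
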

As an aside, Lemma \ref{lem:Gap} is closely related to a result of the first author and Shneer \cite{Seva} regarding the empirical mean of generations in the population at time $T$, which we shall denote $\mathcal{E}_T$.  Their result says that $rmT-\mathcal{E}_T$  converges almost surely as $T\rightarrow\infty$ to a finite random random variable,  resembling Lemma \ref{lem:Gap} because $\mathbb{E}[\mathcal{E}_T]=\mathbb{E}[E_T]$. 
\begin{proof}[Sketch proof of Lemma \ref{lem:Gap}]
In this sketch proof, we will assume familiarity with the limit theory of BGW processes.  We begin with the equality
\begin{align} \label{eq:gap2}
\mathrm{Gap}_T := \int_0^1 (r(m-1)T - \log F_T'(s)) \frac{F_T'(s)}{1 - F_T(0)} \mathrm{d}s,
\end{align}
which follows from \eqref{eq:gap}, \eqref{eq:expect} and the fact that $F_T'(s)/(1-F_T(0))$ is a probability measure on $[0,1]$. 

Now we separately analyse the supercritical, critical, and subcritical cases.  For the supercritical case $m>1$, consider the unit-mean martingale $Z_t := N_te^{ - r(m-1)t}$, and define $\varphi_t(\theta) := \mathbb{E}[Z_t e^{ - \theta Z_t}]$. Changing variable $s \mapsto \theta e^{- \theta e^{ - r(m-1)T}}$ in \eqref{eq:gap2}, we see that
\begin{align} \label{eq:supergap}
\mathrm{Gap}_T = - \int_0^\infty (\theta e^{ - r(m-1)T} + \log \varphi_T(\theta)) \frac{ \varphi_T(\theta)}{ 1 - F_T(0) } \mathrm{d}\theta,
\end{align}
where again, it may be shown that $\varphi_T(\theta)/(1 - F_T(0))$ is a probability density on $(0,\infty)$. Under the Kesten-Stigum condition \cite{AN} that $m \in (1,\infty)$ and $\sum_{ k =1}^\infty p_kk \log k  < \infty$, the function $\varphi_T(\theta)$ converges to a limit $\varphi(\theta)$, and hence \eqref{eq:supergap} converges to
\begin{align*}
\mathrm{Gap}_\infty := - \int_0^\infty \log \varphi(\theta) \frac{ \varphi(\theta) }{1 - F_\infty(0) } \mathrm{d}\theta
\end{align*}
as $T \to \infty$. We note that $1 - F_\infty(0)$ is the survival probability.

In the critical case $m=1$ with $\sum_{k=1}^\infty p_kk^2<\infty$, Yaglom's theorem \cite{AN} states that  $\lim_{T \to \infty} F_T'(e^{ - 2 \theta/\sigma^2 rT} ) = (1 + \theta)^2$, and $\lim_{ T \to \infty} (1 - F_T(0)) = \frac{2}{\sigma^2 rT}$. Plugging these facts into \eqref{eq:gap2} using the change of variable $s = e^{ - \frac{2}{\sigma^2}{rT} \theta}$ to obtain the first equality below, and using the change of variable $\log(1+\theta) = \zeta$ and the gamma integral to obtain the second, we have 
\begin{align*}% \label{eq:critgap}
\mathrm{Gap}_\infty = 2\int_0^\infty \log(1 + \theta) \frac{\mathrm{d}\theta}{(1+\theta)^2} =2.
\end{align*}

Finally in the subcritical case $m < 1$,  $ F_T'(s) e^{ -r(m-1)T } $ converges to a function $B(s)$ as $T\rightarrow\infty$ \cite{AN}, and moreover $\lim_{T \to \infty} e^{ - r(m-1)T} (1 - F_T(0)) = c_{\mathrm{surv}}$ exists. Thus \eqref{eq:gap2} converges to
\begin{align*}% \label{eq:subgap}
\mathrm{Gap}_\infty = -\frac{1}{c_{\mathrm{surv}}} \int_0^1 \log (B(s)) B(s) \mathrm{d}s.
\end{align*}
While in each case the non-negativity of $\mathrm{Gap}_\infty$ may not be transparent from the expressions for $\mathrm{Gap}_\infty$, it follows immediately from the fact that each $\mathrm{Gap}_T$ is non-negative.
\end{proof}

We emphasise that the constant $\mathrm{Gap}_\infty$ in Lemma \ref{lem:Gap} is universal in all critical processes with finite variance.  Namely,  sampling from a critical process at a large time $T$, 
\begin{align*}
\mathbb{E}[E_T |N_T \geq 1 ] = rT - 2 + o(1).
\end{align*}
We apply the correction of 2 to the mean of $E_T$ in Samuel's central limit theorem (\ref{eq:samuels}) for a critical birth-death process,  
plotted in Figure \ref{CLTfig}. 
The exact probability distribution of $E_T$ is also plotted, whose computation is described next.

\subsection{The birth-death process}
For an explicit computation we now specialise Proposition \ref{thm:total} to the birth-death process,  defined by the binary offspring distribution $p_k=0$ for $k\not\in\{0,2\}$. Let $\beta$ and $\alpha$ denote the birth and death rates respectively, so $r = \alpha+\beta$, $p_0 = \alpha/(\alpha+\beta)$ and $p_2 = \beta/(\alpha+ \beta)$. From Kolmogorov's equations \eqref{eq:kolmogorov} it is possible to compute the generating function $F_t(s)$ for the birth-death process. Indeed, using the offspring generating function $f(s) = \frac{ \alpha}{\alpha+\beta} + \frac{ \beta}{ \alpha + \beta } s^2$ we have 
\begin{align} \label{eq:bd}
F_t(s) = \frac{ \alpha(1 - s) e^{ (\beta-\alpha)t} + \beta s - \alpha }{ \beta(1 - s) e^{ (\beta - \alpha) t } + \beta s - \alpha } ,
\end{align} seen for example in \cite{AN}.  The generating function (\ref{eq:bd}) holds for the non-critical case $\alpha\not=\beta$. We shall come back to the critical case later. Differentiating (\ref{eq:bd}),
\begin{align} \label{eq:bd der}
F_T'(s)  = \frac{  (\beta-\alpha)^2 e^{(\beta-\alpha)T} }{ \left( \beta(1 - s) e^{ (\beta-\alpha)T} + \beta s - \alpha \right)^2 } . 
\end{align}
Plugging \eqref{eq:bd der} into Theorem \ref{thm:total} we have
\begin{align*}
\mathbb{P}(E_T = j , N_T \geq 1 ) = \frac{ (-1)^j 2^j  e^{ - (\alpha+\beta)T} }{ j!} \int_0^1 \left\{  \log\left( \frac{ \beta(1 - s) e^{ (\beta-\alpha)T} + \beta s - \alpha }{ (\beta-\alpha)e^{\beta T}} \right) \right\}^j \mathrm{d}s.
\end{align*}
Changing variable $x = \frac{ \beta(1 - s) e^{ (\beta-\alpha)T} + \beta s - \alpha }{ (\beta-\alpha)e^{2 \beta T}}$  we have 
\begin{align*}\mathbb{P}(E_T = j , N_T \geq 1 )= \frac{ (-1)^{j} 2^j e^{ - (\alpha+ \beta)T} }{j!} \frac{ (\beta - \alpha) e^{ \beta T } }{ \beta( e^{ (\beta-\alpha)T} - 1 ) }  \int^{ \frac{ \beta e^{(\beta-\alpha)T} - \alpha}{ (\beta- \alpha)e^{\beta T} } }_{ e^{ - \beta T } } \log(x)^j \mathrm{d}x.
\end{align*}
It is possible to perform the integral. We begin with the identity
\begin{align*}% \label{eq:Xa}
\int_a^b \log( x)^j \mathrm{d}x = \sum_{ i = 0}^j \frac{j!}{i!} (-1)^{j-i} \left( b \log(  b)^i - a \log(a)^i \right),
\end{align*} 
from which we obtain
\begin{align*}
&\mathbb{P} \left( E_T=j,N_T \geq 1 \right)\\
&= \frac{ (-1)^{j} 2^j e^{ - (\alpha+ \beta)T} }{j!} \frac{ (\beta - \alpha) e^{ \beta T } }{ \beta( e^{ (\beta-\alpha)T} - 1 ) } \sum_{ i = 0}^j \frac{j!}{i!} (-1)^{j-i} \left( \frac{ \beta e^{(\beta-\alpha)T} - \alpha}{ (\beta- \alpha)e^{\beta T} }  \log\left( \frac{ \beta e^{(\beta-\alpha)T} - \alpha}{ (\beta- \alpha)e^{\beta T} } \right)^i -  e^{ - \beta T }  \log( e^{ - \beta T } )^i \right)\\
&= 2^j e^{ - \alpha T } \frac{ \beta - \alpha}{ \beta( e^{ (\beta-\alpha)T} - 1 ) } \sum_{ i = 0}^j \frac{1}{i!} \left( \frac{ \beta e^{(\beta-\alpha)T} - \alpha}{ (\beta- \alpha)e^{\beta T} }  (-1)^i \log\left( \frac{ \beta e^{(\beta-\alpha)T} - \alpha}{ (\beta- \alpha)e^{\beta T} } \right)^i -  e^{ - \beta T }  (\beta T)^i \right).
\end{align*}
Dividing through by $\mathbb{P}(N_T \geq 1)= \frac{ \beta - \alpha}{ \beta - \alpha e^{ - (\beta-\alpha)T} }$ , 
\begin{align}\label{ncc}
\mathbb{P} \left( E_T=j|N_T \geq 1 \right)= 2^j e^{ - \alpha T } \frac{ \beta - \alpha e^{ - ( \beta - \alpha )T } }{ \beta( e^{ (\beta-\alpha)T} - 1 ) } \sum_{ i = 0}^j \frac{1}{i!} \left( \frac{ \beta e^{(\beta-\alpha)T} - \alpha}{ (\beta- \alpha)e^{\beta T} }   \log\left( \frac{ (\beta- \alpha)e^{\beta T} }{ \beta e^{(\beta-\alpha)T} - \alpha} \right)^i -  e^{ - \beta T }  (\beta T)^i \right).
\end{align}
In the special case of the pure-birth process $\alpha=0$, otherwise known as the Yule process, the number of reproduction events on the ancestral lineage given by (\ref{ncc}) simplifies to
\begin{align}\label{Yule}
\mathbb{P} \left( E_T=j|N_T \geq 1 \right)= 2^j    \frac{  1  }{  e^{ \beta T} - 1} \sum_{ i = 0}^j \left( \mathrm{1}_{i=0} - e^{ - \beta T } \frac{ (\beta T)^i}{i!} \right) =      \frac{  2^j  }{  e^{ \beta T} - 1} \left\{ 1 - e^{ - \beta T } \sum_{ i = 0}^j  \frac{ (\beta T)^i}{i!} \right\}.
\end{align}

The derivation to arrive at (\ref{ncc}) was based on the non-critical case $\alpha \neq \beta$.  As for the critical case $\alpha = \beta$,  the fact that the probabilities in question are continuous functions of the parameters can be used: taking the limit $\alpha\rightarrow\beta$ in (\ref{ncc}),
\begin{align}\label{cc}
\mathbb{P} \left( E_T=j|N_T \geq 1 \right)= 2^j e^{ - \beta T } \frac{\beta T+1}{\beta T}  \sum_{ i = 0}^j \frac{1}{i!} \left( \frac{1+\beta T}{e^{\beta T}} \log \left( \frac{e^{\beta T} }{1+ \beta T }\right)^i - \frac{ (\beta T)^i}{ e^{\beta T} }  \right).
\end{align}

A comparison of the exact  (\ref{cc}) and approximate (\ref{eq:samuels}) distributions of $E_T$ is given in Figure \ref{CLTfig}. 

\begin{figure}\includegraphics[scale=0.1]{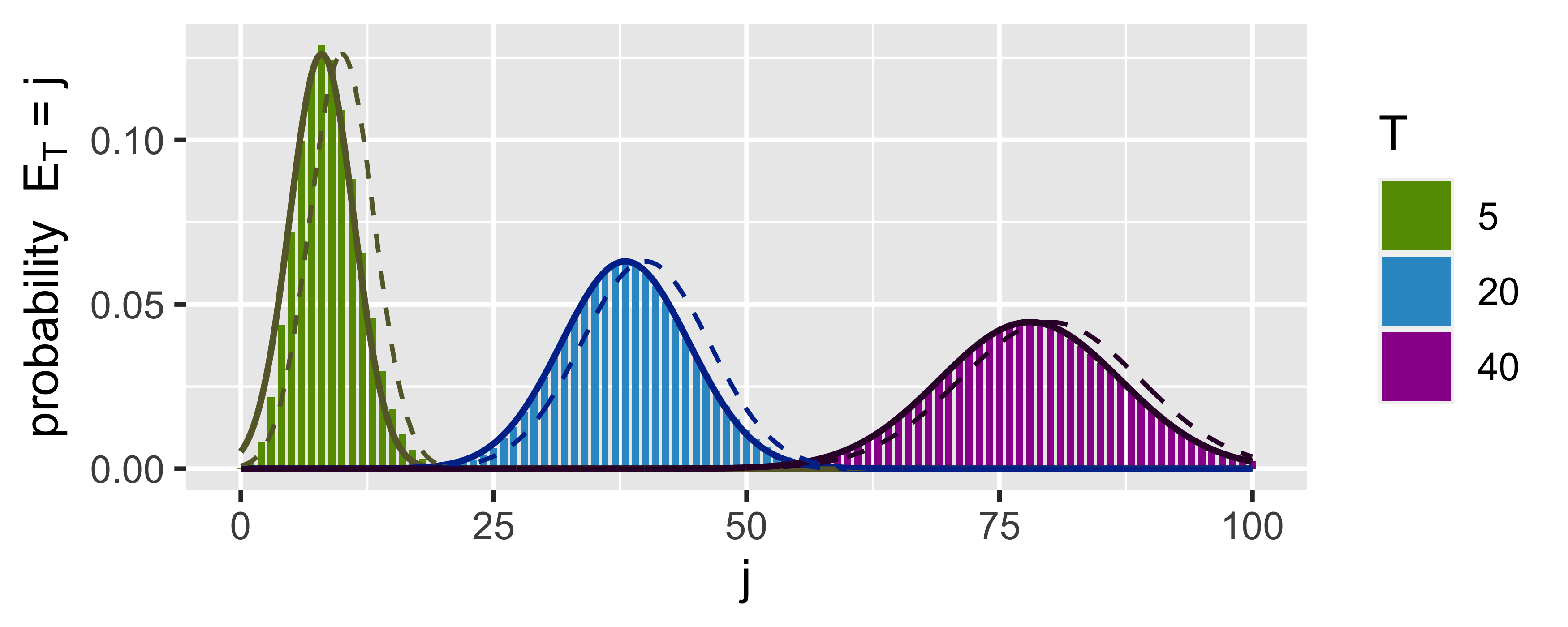}
\caption{A critical birth-death branching process with $p_0=p_2=1/2$ and $r=1$ is considered.  The bars represent the exact probabilities that $E_T=j$ given by (\ref{cc}),  the dashed lines show Samuels' Gaussian approximation (\ref{eq:samuels}),
and the solid lines show Samuels' approximation but with the mean corrected by Lemma \ref{lem:Gap}.}\label{CLTfig}
\end{figure} 

\subsection{The heavy-tailed branching process} \label{sec:heavy}
Another special case is the heavy-tailed branching process defined by reproduction rate $r =1$ and offspring probabilities $p_k=\frac{1}{k(k-1)}$ for $k\geq2$. The offspring generating function is given by 
\begin{align*}
f(s) = \sum_{ k \geq 2} \frac{1}{k(k-1)}s^k.
\end{align*}
It is easily verified that $f'(s) = - \log(1-s)$, and then integrating and using $f(0)=0,f(1)=1$ we find that $f(s) = s + (1-s)\log(1-s)$. It is straightforward to solve Kolmogorov's equation \eqref{eq:kolmogorov} to find that 
\begin{align*}
F_t(s) = 1 - (1-s)^{e^{-t}}.
\end{align*}
Consequently,
\begin{align*}
T + \log F_T'(s) = - (1 -e^{-T}) \log (1 - s).
\end{align*}
Finally, by Theorem \ref{thm:total} and the identity $\int_0^1 \log(x)^j \mathrm{d}x = j!$ we have
\begin{align*}
\mathbb{P}(E_T = j ) = \frac{e^{-T}}{j!} (1-e^{-T})^j.
\end{align*}
That is, the number of reproduction events on the uniformly sampled cell from the population at time $T$ has Geometric distribution with mean $e^T$.  The central limit theorem (\ref{eq:samuels}) proven in \cite{samuels} is not relevant here, because the  mean offspring number is infinite, and notably, the rate of ancestral reproduction per unit time grows to infinity with $T$ even though each cell in the population reproduces at constant rate one. This rather extreme example emphasises a basic conceptual point of the paper, that cells on an ancestral lineage may behave, statistically speaking, very differently from typical members of the population.

While this section discussed the total number of events on the sampled ancestral lineage,  which in a sense is a global view of the ancestral reproduction rate, in the next section we take a local view,  looking at how the reproduction rate varies along the ancestral lineage.
\section{Rate of reproduction along the ancestral lineage}\label{Ratesec}
\subsection{As a function of time}
We aim to calculate the reproduction rate at a specific timepoint on the ancestral lineage. Recall from Theorem \ref{main} that the probability that there are $j$ reproduction events on the ancestral lineage of sizes $\ell_1,..,\ell_{j}$ during time intervals $[t_1,t_1+\mathrm{d}t_1),..,[t_j,t_j+\mathrm{d}t_j)$ is
$$
\frac{e^{-rT}r^j}{(1-F_T(0))j!}\int_0^1\left(\prod_{i=1}^{j}\ell_ip_{\ell_i}F_{T-t_i}(s)^{\ell_i-1}\right)F_{T-t}(s)\mathrm{d}s\mathrm{d}t_1..\mathrm{d}t_{j}.
$$
Integrating over the reproduction times and summing over the reproduction sizes of $j-1$ of these events, the probability that there are $j$ reproduction events on the ancestral lineage, one of which has size $\ell$ and occurs during the time interval $[t,t+dt)$, is
\begin{align}
&\frac{e^{-rT}r^j}{(1-F_T(0))(j-1)!}\sum_{\ell_1,..,\ell_{j-1}\geq1}\int_{[0,T]^{j-1}}\int_0^1\left(\prod_{i=1}^{j-1}\ell_ip_{\ell_i}F_{T-t_i}(s)^{\ell_i-1}\right)\ell p_{\ell}F_{T-t}(s)^{\ell-1}\mathrm{d}s\mathrm{d}t_1..\mathrm{d}t_{j-1}dt\nonumber\\
&=\frac{e^{-rT}r^j}{(1-F_T(0))(j-1)!}\int_0^1\left(\prod_{i=1}^{j-1}\int_0^Tf'(F_{T-t_i}(s))\mathrm{d}t_i\right)\ell p_{\ell}F_{T-t}(s)^{\ell-1}\mathrm{d}s\mathrm{d}t\nonumber\\
&=\frac{e^{-rT}r\ell p_{\ell}}{(1-F_T(0))(j-1)!}\int_0^1\left(rT + \log F_T'(s)\right)^{j-1}F_{T-t}(s)^{\ell-1}\mathrm{d}s\mathrm{d}t\nonumber
\end{align}
where the last equality comes from (\ref{eq:tchaik8}). Summing over the number of reproduction events,  the probability that there is a reproduction event of size $\ell$ during the time interval $[t,t+dt)$ on the ancestral lineage is
\begin{align*}
&\sum_{j\geq1}\frac{e^{-rT}r\ell p_{\ell}}{(1-F_T(0))(j-1)!}\int_0^1\left(rT + \log F_T'(s)\right)^{j-1}F_{T-t}(s)^{\ell-1}\mathrm{d}s\mathrm{d}t\\
&=\frac{r\ell p_{\ell}}{1-F_T(0)}\int_0^1 F_T'(s)F_{T-t}(s)^{\ell-1}\mathrm{d}s\mathrm{d}t\\
&=\frac{r\ell p_{\ell}}{1-F_T(0)}\int_0^1 F_t'(F_{T-t}(s))F_{T-t}'(s)F_{T-t}(s)^{\ell-1}\mathrm{d}s\mathrm{d}t\quad&\text{(semigroup property)}\\
&=\frac{r\ell p_{\ell}}{1-F_T(0)}\int_{F_{T-t}(0)}^1 F_t'(z)z^{\ell-1}\mathrm{d}z\mathrm{d}t\quad&\text{(change of variable $z=F_{T-t}(s)$).}
\end{align*}
Summing over the possible sizes $\ell$ of the reproduction event, the probability that a reproduction event of any size occurs on the ancestral lineage during the time interval $[t,t+dt)$ is
\begin{align*}&\sum_{\ell\geq1}\frac{r\ell p_{\ell}}{1-F_T(0)}\int_{F_{T-t}(0)}^1 F_t'(z)z^{\ell-1}\mathrm{d}z\mathrm{d}t\\
&=\frac{r}{1-F_T(0)}\int_{F_{T-t}(0)}^1 F_t'(z)f'(z)\mathrm{d}z\mathrm{d}t,
\end{align*}
which gives the following.
\begin{proposition}\label{localrateprop}On the event that the population survives to time $T$, sample a cell uniformly at random from the population at this time. On the sampled cell's ancestral lineage at time $t<T$, reproduction happens at rate
\begin{align} \label{eq:rateeq}
R_T(t) := \frac{r}{1-F_T(0)}\int_{F_{T-t}(0)}^1 F_t'(s)f'(s)\mathrm{d}s.
\end{align}
\end{proposition}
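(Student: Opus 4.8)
The plan is to start from the joint event probabilities furnished by Theorem~\ref{main}, isolate the contribution of a single \emph{marked} reproduction event located at time $t$, and then sum out all of the other events using the exponential series together with Lemma~\ref{lem:totalrate}. The computation is essentially the bookkeeping already begun in the paragraphs preceding the statement, organised into three stages: (i) mark one event; (ii) sum over the number and sizes of the unmarked events; (iii) a cosmetic change of variables via the semigroup property.

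For stage~(i), Theorem~\ref{main} gives the probability that the lineage has $j$ reproduction events of sizes $\ell_1,\dots,\ell_j$ at times in $[t_i,t_i+\mathrm{d}t_i)$; by exchangeability of those $j$ events, the probability that there are $j$ events, one of which has size $\ell$ and lies in $[t,t+\mathrm{d}t)$, is $j$ times the probability that (say) the $j$-th is the marked one. Integrating the remaining $t_1,\dots,t_{j-1}$ over $[0,T]$ and summing $\ell_1,\dots,\ell_{j-1}$ over $\mathbb{N}$, and using $f'(s)=\sum_{\ell\ge 1}\ell p_\ell s^{\ell-1}$, produces the factor $\prod_{i=1}^{j-1}\int_0^T f'(F_{T-t_i}(s))\,\mathrm{d}t_i$, which by Lemma~\ref{lem:totalrate} (equation \eqref{eq:tchaik8}) equals $(rT+\log F_T'(s))^{j-1}$. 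So the probability of $j$ events, one of size $\ell$ in $[t,t+\mathrm{d}t)$, is $\tfrac{e^{-rT}r\ell p_\ell}{(1-F_T(0))(j-1)!}\int_0^1 (rT+\log F_T'(s))^{j-1}F_{T-t}(s)^{\ell-1}\,\mathrm{d}s\,\mathrm{d}t$.

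For stage~(ii) I would sum over $j\ge 1$: since $\sum_{j\ge 1}\tfrac{(rT+\log F_T'(s))^{j-1}}{(j-1)!}=e^{rT+\log F_T'(s)}=e^{rT}F_T'(s)$, the prefactor $e^{-rT}$ cancels and the intensity of size-$\ell$ events at time $t$ becomes $\tfrac{r\ell p_\ell}{1-F_T(0)}\int_0^1 F_T'(s)F_{T-t}(s)^{\ell-1}\,\mathrm{d}s$. Summing over $\ell$, again via $f'$, gives $\tfrac{r}{1-F_T(0)}\int_0^1 F_T'(s)\,f'(F_{T-t}(s))\,\mathrm{d}s$. Finally stage~(iii): the semigroup property $F_T=F_t\circ F_{T-t}$ gives $F_T'(s)=F_t'(F_{T-t}(s))\,F_{T-t}'(s)$, and substituting $z=F_{T-t}(s)$ (so $\mathrm{d}z=F_{T-t}'(s)\,\mathrm{d}s$, with $s=0\mapsto F_{T-t}(0)$ and $s=1\mapsto 1$) yields exactly \eqref{eq:rateeq}.

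I do not expect a genuine obstacle: the only point needing care is the interchange of the sums over $j$ and over $\ell$ with the integral over $s\in[0,1]$, which is legitimate by Tonelli since every term is non-negative. An alternative, shorter route avoids stages (i)--(ii) entirely by invoking the Cox representation of Theorem~\ref{main2}: conditionally on $S$ the instantaneous total reproduction rate at time $t$ is $\sum_{\ell\ge 1} r\ell p_\ell F_{T-t}(S)^{\ell-1}=r f'(F_{T-t}(S))$, and taking expectation over $S$ with density $F_T'(s)/(1-F_T(0))$ and then changing variables $z=F_{T-t}(s)$ gives the same formula; I would likely present the direct argument as the main proof and mention this as a remark.
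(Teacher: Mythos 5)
Your proposal is correct and follows essentially the same route as the paper: mark one of the $j$ events (picking up the factor $j$ that turns $1/j!$ into $1/(j-1)!$), collapse the remaining $j-1$ events via Lemma \ref{lem:totalrate} and the exponential series so that $e^{-rT}$ cancels against $e^{rT}F_T'(s)$, and finish with the semigroup identity and the substitution $z=F_{T-t}(s)$. The only immaterial difference is that you sum over $\ell$ before changing variables while the paper changes variables first, and your Cox-process shortcut via Theorem \ref{main2} is a legitimate alternative of exactly the kind the paper itself alludes to in the proof of Proposition \ref{thm:total}.
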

In the special case of no deaths, that is $p_0=0$, we have that $F_t(0)=\mathbb{P}[N_t=0]=0$ for all $t$ and hence (\ref{eq:rateeq}) simplifies to
\begin{align} \label{eq:rateeqnodeath}
R(t) =r\int_{0}^1 F_t'(s)f'(s)\mathrm{d}s,
\end{align}which does not depend on the sampling time $T \geq t$.  In the absence of death we shall write $R(t)=R_T(t)$ for short. Perhaps the irrelevance of the sampling time is unsurprising. After all, the ancestor at time $T'<T$ of a uniformly chosen cell at time $T$ are themselves uniformly chosen from the population at time $T'$,  and thus the rate at which the ancestor at time $t<T'$ reproduces should not depend on whether the population was sampled at time $T'$ or $T$.  But this logic does not hold if the death rate is positive, because conditioning that the population is alive at the sampling time influences the historical population's rate of reproduction. 

We now use Proposition \ref{localrateprop} to compute the reproduction rate $R(t)$ explicitly in the case that the BGW process is a Yule tree.  Applying $f(s) = s^2$ and \eqref{eq:bd der} with $\alpha=0$ to \eqref{eq:rateeqnodeath},
\begin{align} \label{Yulerate}
R(t) = 2 \int_0^1 \frac{e^t}{(e^t(1-s)+s)^2 }s~ \mathrm{d}s = \frac{2}{(1-e^{-t})^2} ( 1 - (t+1)e^{-t}),
\end{align}
which is plotted in Figure \ref{Yuleratefig}. It is also possible to calculate $R(t)$ via an alternative route. The total number of events $E_T$ on the ancestral lineage up to time $T$, defined in Section 3, is related to $R(t)$ by $\mathbb{E}[E_T] = \int_0^T R(t) \mathrm{d}t,$ or equivalently $R(T) = \frac{\mathrm{d}}{\mathrm{d}T} \mathbb{E}[E_T]$. So one may compute the expectation of $E_T$ whose distribution is given in \eqref{Yule} for the Yule process, and thereafter differentiate with respect to $T$. We leave the details to the interested reader.

\begin{figure}\includegraphics[scale=0.1]{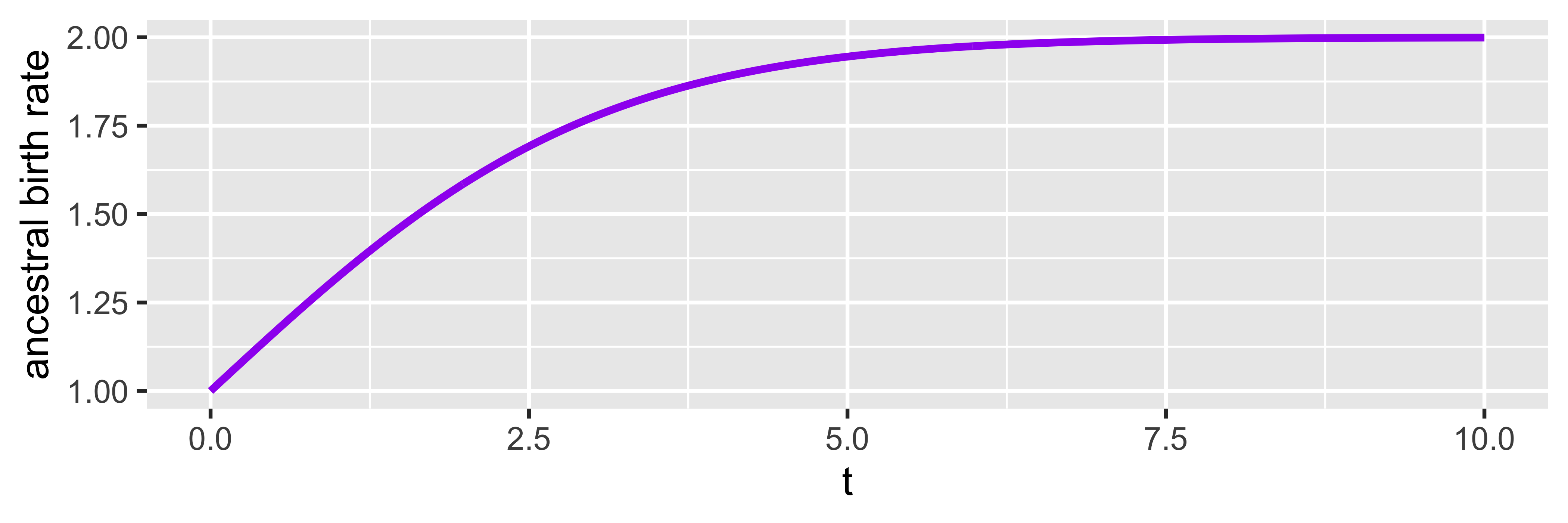}
\caption{For a Yule process with births at rate $1$, the birth-rate (\ref{Yulerate}) along the ancestral lineage of a uniformly sampled cell is plotted.}\label{Yuleratefig}
\end{figure} 

We turn to computing the rate function $R(t)$ associated with the heavy-tailed BGW process introduced in Section \ref{sec:heavy}, defined by $r=1$ and $p_k=\frac{1}{k(k-1)}$ for $k\geq2$. We have that
 $f'(s)=-\log(1-s)$ and $F'_t(s)=e^{-t}(1-s)^{e^{-t}-1}$, so the rate of reproduction on the ancestral lineage at time $t$ is
$$
R(t) = -r\int_0^1e^{-t}(1-s)^{e^{-t}-1}\log(1-s)ds=re^t.
$$
Thus despite a constant reproduction law among cells in the population, the reproduction rate increases exponentially with time along the ancestral lineage. 

The monotonicity of $R_T(t)$ just seen for the special examples of the Yule and heavy-tailed processes is easily seen to be true more generally for supercritical processes. To derive this result, begin with (\ref{eq:dde})
and observe that since $u'(s)$ is positive in the supercritical setting,  $F_t'(s)$ is an increasing function of $t$. Noting also that $F_{T-t}(0)$ is non-increasing in $t$, the integral (\ref{eq:rateeq}) must be an increasing function of $t$. That is to say, for supercritical processes,  the  reproduction rate $R_T(t)$ increases forwards in time along the ancestral lineage.
\subsection{As a function of the population size}
That the reproduction rate varies along the uniform ancestral lineage requires an explanation. Now we expand upon (\ref{heuristic}) to clarify the role of the historical population size. In the following for simplicity,  we remain with the setting $p_0=0$ so that population extinction is impossible. 

First, condition that the population size is $n$ just before time $t$. Then the total rate at which size-$\ell$ reproduction events occur in the population at time $t$ is 
      \begin{align}\label{r1}nrp_\ell.\end{align}
Next, condition on the event that at time $t$ there is a size-$\ell$ reproduction event which takes the population size from $n$ to $n+\ell-1$. Due to exchangeability, each of these $n+\ell-1$ cells are equally likely to land on the ancestral lineage. Thus one of the $\ell$ offspring of the reproduction event lands on the ancestral lineage with probability
      \begin{align}\label{p1}
\frac{\ell}{n+\ell-1}.\end{align}
Multiplying (\ref{r1}) and (\ref{p1}) gives that the rate of size-$\ell$ reproduction on the ancestral lineage at time $t$, conditioned that the population size is $n$ just before time $t$, is
     \begin{align}\label{r2}\frac{nr\ell p_\ell}{n+\ell-1}\end{align}
Observe that neither the time of sampling nor the time on the ancestral lineage play a role in (\ref{r2}) but the population size is key. Taking $n=1$ gives that the ancestral rate of size-$\ell$ reproduction is $rp_\ell$, recovering the original law in the population. On the other hand taking $n\rightarrow\infty$ the rate is $r\ell p_\ell$, which is the size-biased offspring law. Finally we note that to acknowledge the randomness of the population size one can multiple (\ref{r2}) by $\mathbb{P}[N_t=n]$ and sum over $n$ to obtain the expected rate of ancestral reproduction at time $t$ --- this calculation offers an alternative route to derive Proposition \ref{localrateprop} in the case $p_0=0$, but more importantly it points out that the time-varying law $\mathbb{P}[N_t=n]$ of the population size is the origin of variation in the ancestral reproduction rate.

In the concluding section of the paper, for a biological example of variation along sampled ancestral lineages, we are inspired by recent studies to consider the growing population of cells that is the human embryo. 

\section{Mutations in embryogenesis}

Beginning with the zygote through to the end of a human's life, cells accumulate mutations.  Most of these mutations are inconsequential to tissue function but some drive pathogenic states such as cancer. A mutation arising at an early stage in embryonic development has the potential to be especially impactful beause it may be passed on to a significant fraction of cells in fully formed tissues (see for example \cite{Wilms}), and thus a quantity relevant to human health is the mutation rate in the developing embryo.

Recent phylogenetic studies, by sequencing hundreds of genomes from multiple anatomical locations of a few adult humans, have looked along the samples' ancestral lineages backwards in time to the cell divisions which initiated these peoples' embryos \cite{Park,Coorens}.  Curiously, the studies measured that mutation rates are elevated for the earliest cell divisions.  Park and coauthors \cite{Park} estimated that the mutation rate per division was 3.8 (range 1.4 to 6.3 among five people) while the embryo size grew from one to four cells, which lowered to 1.2 (0.8 to 1.9) as the embryo grew larger.  Coorens and coauthors \cite{Coorens} estimated a mutation rate per division of 2.4 (range 1.6 to 3.2 among three people) while the embryo grew up to four cells, which subsequently dropped to 0.7 (0.5 to 1.0). Why the mutation rate should be raised at the very beginning of embryogenesis is unclear. Both studies humbly speculate that the temporary elevation could be due to initially immature DNA repair mechanisms which take time to come into action. Although their explanation seems plausible,  we suggest a more parsimonious explanation based on two facts: (1) cell division times vary (2) mutations arise not only \textit{at} but also \textit{between} cell divisions.

The point (1) on cell division rate variation is clear - cell divisions have been observed by scientists for almost two centuries, and their rates obviously vary.  We note that for mouse embryogenesis, the mean cell division time during the period of growth from 1 cell to 64 cells was calculated to be approximately 14 hours, whereas the first 2 cell divisions were estimated to be longer, each about 18 to 20 hours \cite{Mouse}.  The point (2) on mutations is only recently being understood.  It has long been believed that mutations in human tissues arise predominantly due to errors in DNA replication at cell division, but new insights from modern sequencing technologies show that many mutations do arise during the lifetime of cells independently of division, seen for example in \cite{Abascal}.  The points (1) and (2) together suggest that some cells live longer than others and so may acquire more mutations. In particular, the apparently elevated mutation rate in early embryogenesis doesn't have to be due to time-varying DNA chemistry,  instead it is plausibly just the simple consequence of relatively long cell divisions at the beginning of embryogenesis. 

\begin{figure}[h!]
\centering
\begin{tikzpicture}[xscale=1,yscale=1]
%\draw[step=1.0,yellow,thin] (0,-4) grid (12,4);
\draw[gray] (0,-3.4) -- (0,-2.4);
\draw[gray] (15,-3.4) -- (15,-2.4);

\draw[thick] [line width=1.2mm, CadetBlue] (0,-2.9) -- (15,-2.9);

\draw [very thick, black] (0.9,-2.9) node[cross] {};
\draw [very thick, black] (1.99,-2.9) node[cross] {};
\draw [very thick, black] (4.05,-2.9) node[cross] {};
\draw [very thick, black] (6.914,-2.9) node[cross] {};
\draw [very thick, black] (9.02,-2.9) node[cross] {};
\draw [very thick, black] (12.12,-2.9) node[cross] {};
\draw [very thick, black] (13.41,-2.9) node[cross] {};

\draw [fill, white] (2.8,-2.9) circle [radius=0.1];
\draw [very thick, black] (2.8,-2.9) circle [radius=0.1];

\draw [fill, white] (4.6,-2.9) circle [radius=0.1];
\draw [very thick, black] (4.6,-2.9) circle [radius=0.1];

\draw [fill, white] (7.3,-2.9) circle [radius=0.1];
\draw [very thick, black] (7.3,-2.9) circle [radius=0.1];

\draw [fill, white] (9.4,-2.9) circle [radius=0.1];
\draw [very thick, black] (9.4,-2.9) circle [radius=0.1];

\draw [fill, white] (10.5,-2.9) circle [radius=0.1];
\draw [very thick, black] (10.5,-2.9) circle [radius=0.1];

\draw [fill, white] (11.8,-2.9) circle [radius=0.1];
\draw [very thick, black] (11.8,-2.9) circle [radius=0.1];

\draw [fill, white] (12.6,-2.9) circle [radius=0.1];
\draw [very thick, black] (12.6,-2.9) circle [radius=0.1];

\draw [fill, white] (13.7,-2.9) circle [radius=0.1];
\draw [very thick, black] (13.7,-2.9) circle [radius=0.1];

\draw [fill, white] (14.3,-2.9) circle [radius=0.1];
\draw [very thick, black] (14.3,-2.9) circle [radius=0.1];

\node at (0,-3.7)   (a) {Time $0$};
\node at (15,-3.7)   (a) {Time $T$};

\end{tikzpicture}

\caption{Neutral mutations depicted with black crosses occur at a constant rate $\nu > 0$ along a sampled ancestral lineage. Cell divisions are depicted with white dots. Since the rate of divisions increases as time passes while the rate of mutations is constant, the number of mutations per division is decreasing in time.}
\label{fig:mutations}
\end{figure}
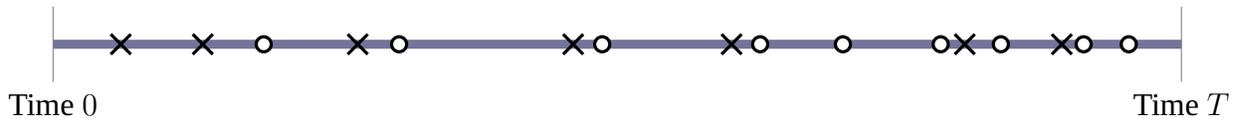

Now recalling the theme of our paper,  it is tempting to try Occam's razor once again.  The embryogenesis studies' observed that the mutation rate per division decreases along sampled ancestral lineages. Meanwhile our analysis showed that under the assumption of constant division rates in the population,  division rates increase along sampled ancestral lineages. Could a simplistic model of constant cell division and mutation rates explain the data?  Consider that cells divide at rate $r$ according to the Yule process, and that each cell acquires a Poisson$(\mu)$ number of mutations at its birth due to DNA replication errors and accumulates mutations during its lifetime as a rate-$\nu$ Poisson process. Take a uniform sample of cells from the population at a large time and look along their ancestral lineages back to the initial individual. Along any one of these lineages, the  division rate according to (\ref{Yulerate}) is $r$ at the earliest time compared to $2r$ at later times.  Due to the increasing ancestral division rate, the expected number of mutations per division decreases from $\mu+\nu\alpha^{-1}$ for the first division to $\mu+\nu \alpha^{-1}2^{-1}$ for later divisions (Figure \ref{fig:mutations}).  So this model can at most explain a two-fold decrease in the ancestral mutation rate per division, falling short of the three-fold decrease observed by \cite{Park,Coorens}.  The discrepancy may have several reasons, the most important of which we speculate to be that cell divisions do indeed speed up after the earliest divisions, accelerating not only on the ancestral lineage but on the population level too.  In any case, we find it notewothy that an incredibly simple model of constant cell division and mutation rates, with the inevitable ancestral division rate bias, can qualitatively explain \cite{Park,Coorens}'s observation.
%
%\textcolor{brown}{To fully explain the three-fold variation, one possibility is to relinquish the Markov assumption, specifically to allow that cell division times may not be exponentially distributed. For cell division times distributed according to a density $f$ on $[0,\infty)$,  Samuel's central limit theorem says that the mean number of cell divisions per unit time on the uniform ancestral lineage is asymptotically
%$$
%\int_0^\infty f(t)e^{-mt} dt.
%$$
%Since $f$ is arbitrary, the fold change between () and the rate of the first division on the ancestral lineage $\int f$, is also arbitrary. 
% division rates at early compared to late times along the uniform ancestral lineage can be arbitrarily large when considering the whole space of division time distributions.  Of course there may be many other explanations to explain the data too, }

Reproductive bias along sampled ancestral lineages is of course not limited to the developing embryo nor to continuous-time branching processes with homogeneous reproductive law. The concept holds for populations far more generally.  The only possible exception is a population whose genealogical structure is perfectly symmetrical, which is inconceivable in biology. 

\section*{Acknowledgements}
The second author is supported by the EPSRC funded Project EP/S036202/1 \emph{Random fragmentation-coalescence processes out of equilibrium}. The authors would like to thank Amaury Lambert for pointers on literature.

\end{document}